 \newcolumntype{2}{D{.}{}{2.0}}
\newcommand{\hiddenpower}[2] { \ifnum \numexpr#2=1 #1 \else #1^#2 \fi }
\numberwithin{equation}{section}
\def\be{\begin{equation}}
\def\ee{\end{equation}}
\def\ba{\begin{eqnarray}}
\def\ea{\end{eqnarray}}
\newcounter{diff_order}
\newcounter{diff_power}
\newcommand{\rawdiff}[3]
{
	\setcounter{diff_order}{0}
	\clist_map_inline:nn{#3}{\stepcounter{diff_order}}
	
	\frac{\hiddenpower{#1}{\thediff_order} #2}
	{
		\def\old_var{DefaultValue}
		\setcounter{diff_power}{0}
		
		\clist_map_inline:nn{#3}
		{
			\def\new_var{##1}
			\ifnum \thediff_power=0
				\stepcounter{diff_power}
			\else
				\tl_if_eq:NNTF \new_var \old_var
				{\stepcounter{diff_power}}
				{
					#1 \hiddenpower{\old_var}{\thediff_power}
					\setcounter{diff_power}{1}
				}
			\fi

			\def\old_var{##1}
		}
		
		#1 \hiddenpower{\old_var}{\thediff_power}
	}
}
\def\Label#1{\label{#1}\ifmmode\llap{[#1] }\else
 \marginpar{\smash{\hbox{\tiny [#1]}}}\fi}
 \def\Label{\label}
\newcommand{\lb}{\left(}
\newcommand{\rb}{\right)}
\newtheorem{thm}{Theorem}[section]
\newtheorem{lem}[thm]{Lemma}
\newtheorem{cor}[thm]{Corollary}
\newtheorem{pro}[thm]{Proposition}
\theoremstyle{definition}
\newtheorem{defn}[thm]{Definition}
\newtheorem{rem}[thm]{Remark}
\newtheorem{exam}[thm]{Example}
\newcommand{\id}{\operatorname{id}}
\DeclareMathOperator{\EEnd}{End}
\renewcommand{\ln}[1]{\text{ln} \lb #1 \rb}
\begin{document}

\allowdisplaybreaks

\newcommand{\arXivNumber}{2211.00451}

\renewcommand{\PaperNumber}{105}

\FirstPageHeading

\ShortArticleName{Quantum Groups, Discrete Magnus Expansion, Pre-Lie and Tridendriform Algebras}

\ArticleName{Quantum Groups, Discrete Magnus Expansion,\\ Pre-Lie and Tridendriform Algebras}

\Author{Anastasia DOIKOU~$^{\rm ab}$}

\AuthorNameForHeading{A.~Doikou}

\Address{$^{\rm a)}$~Department of Mathematics, Heriot-Watt University,
Edinburgh EH14 4AS, UK}
\EmailD{\mail{a.doikou@hw.ac.uk}}
\URLaddressD{\url{https://sites.google.com/view/anastasiadoikou}}

\Address{$^{\rm b)}$~Maxwell Institute for Mathematical Sciences, Edinburgh EH8 9BT, UK}

\ArticleDates{Received July 09, 2024, in final form December 01, 2025; Published online December 11, 2025}

\Abstract{We review the discrete evolution problem and the corresponding solution as a~discrete Dyson series in order to rigorously derive a generalized discrete version of the Magnus expansion. We also systematically derive the discrete analogue of the pre-Lie Magnus expansion and express the elements of the discrete Dyson series in terms of a tridendriform algebra binary operation. In the generic discrete case, extra significant terms that are absent in the continuous or the linear discrete case appear in both Dyson and Magnus expansions. Based on the rigorous discrete derivation key links between quantum algebras, tridendriform and pre-Lie algebras are then established. This is achieved by examining tensor realizations of quantum groups, such as the Yangian. We show that these realizations can be expressed in terms of tridendriform and pre-Lie algebras. The continuous limit as expected provides the corresponding non-local charges of the Yangian as members of the pre-Lie Magnus expansion.}

\Keywords{Yangians; integrability; discrete evolution problem; Magnus expansion; pre-Lie algebras; (tri)dendriform algebras}

\Classification{16T05; 17D99; 81R50}

\section{Introduction}\label{section1}

In the present study, we identify interesting links between quantum groups~\cite{Drinfeld, Jimbo, FadTakRes}, and tridendriform~\cite{Dendri, Loday} and pre-Lie algebras (also studied under the name chronological algebras)~\mbox{\cite{Chrono, PreLie, Pre1}} (see also~\cite{Bai, Manchon} for recent reviews). Specifically, we systematically derive the discrete analogues of Dyson series~\cite{Dyson} and Magnus expansion~\cite{Magnus} as solutions of a discrete evolution problem. We then express the discrete Dyson series in terms of a tridendriform algebra, taking into consideration certain extra terms that are not present in the continuous or the linear discrete case and we then precisely derive the discrete pre-Lie Magnus expansion, in analogy to the continuous or the linear discrete case (see~\cite{EbraMan, new3, new4}). We note that the discrete Magnus expansion, albeit a linearized version, was derived for the first time in~\cite{EbraMan}. The discrete version proposed here is an improved generalization in the sense that non-trivial higher-order terms are included, due to choice of the linear operator of the discrete evolution problem. A more precise explanation is provided in Section~\ref{section3}
 (see, e.g., Remark~\ref{remark12} and comments before equation~(\ref{discrete})). As in~\cite{EbraMan}, the use of Rota--Baxter operators~\cite{Rota, Rota2, Rota3} has been essential to express the discrete series in connection with tridendriform and pre-Lie algebras. On the other hand, tensor realizations of quantum groups~\cite{ Drinfeld, Drinfeld1, Drinfeld2, FadTakRes}, such as the Yangian~\cite{Chari, Drinfeld, Drinfeld1, Drinfeld2, Molev}, are also solutions of a discrete evolution problem. Hence, we deduce that the coproducts of the elements of the Yangian can be re-expressed in terms of suitable tridendriform and pre-Lie algebras and this is one of the most important findings of this investigation.

Before we describe in detail what is achieved in each section, we first recall the general setup and some necessary preliminaries on the Magnus expansion as a solution of a linear evolution problem, whereas in the subsequent section we briefly recall basic notions on Rota--Baxter, pre-Lie and tridendriform algebras.
We note that interesting links between pre-Lie algebras, rooted tree graphs, (tri)dendriform and Rota--Baxter algebras have been reported (see, for example,~\mbox{\cite{RBLoday, Ebra2, EbraMan, Dendri}}), whereas links between tridendriform, Rota--Baxter and (quasi)shuffle algebras have been also revealed in~\cite{LodaySh}. Recent findings on the relationships between pre-Lie algebras and braces (nilpotent rings)~\cite{Rump, Smok1, Smok2} have generated increased interest on these distinct algebraic structures opening up unexplored research avenues.
It is worth pointing out that the notion of infinitesimal Hopf algebras and its connections to pre-Lie and dendriform algebras has been explored in~\cite{Aguiar, Aguiar1}. In this study, however, we establish links with typical Hopf algebras, such as the Yangian that also appear in quantum integrable systems.
Some of these profound emerging links within the framework of classical and quantum integrability
{are central themes in our analysis}, while others will be further examined and extended in future works.

We start our discussion by recalling the
initial value problem associated with a linear differential equation. Indeed,
let $A$, $T$ be in general some linear operators (for instance, we focus in this section on $T, A \in \EEnd\bigl({\mathbb C}^{\mathcal N}\bigr)$, ${\mathbb C}^{\mathcal N}$ is the ${\mathcal N} \in {\mathbb Z}^+$ dimensional complex vector space) depending on two parameters, $\xi \in {\mathbb R}$, $\alpha \in {\mathbb C}$, such that
\begin{equation}
\partial_{\xi} T(\xi, \alpha) = \alpha A(\xi) T(\xi, \alpha), \qquad T(x_0,a) =T_0. \label{evo1}
\end{equation}
The formal solution of the evolution equation above can be given as
(we consider simple initial conditions $T(x_0, \alpha) =1$)
\begin{equation}
T(x, \alpha
) = \overset{\curvearrowleft} {\mathrm P} \exp \biggl(\alpha\int_{x_0}^{x}A(\xi) \,{\rm d} \xi\biggr), \qquad x>x_0. \label{eq:CTS_T}
\end{equation}
The latter solution is a {\it path ordered} exponential (called monodromy), which is formally
expressed in terms of Dyson series~\cite{Dyson}
\begin{gather}
\overset{\curvearrowleft} {\mathrm P} \exp\biggl( \alpha \int_{x_0}^{x} A(\xi) \,{\rm d} \xi \biggr)
=
\sum_{n=0}^{\infty}\alpha^n \int_{x_0}^x {\rm d}x_n A(x_n)
\int_{x_0}^{x_n }{\rm d}x_{n-1}A(x_{n-1}) \cdots \int_{x_0}^{x_2} {\rm d}x_{1} A(x_{1}). \label{12b}
\end{gather}

Magnus~\cite{Magnus} suggested that the solution $T(x, \alpha)$ of the linear evolution problem can
be expressed as a real exponential, i.e., \smash{$T(x, \alpha) = {\rm e}^{{\mathcal Q}(x,\alpha)}$} such that
\smash{${\rm e}^{{\mathcal Q}(x,\alpha)}: =1+\sum_{n=1}^{\infty} \frac{{\mathcal Q}^n(x,a)}{n!}$},
where the following formal series are considered
\begin{gather*}
T(x, \alpha) = 1+ \sum_{n =1}^{\infty} \alpha^n T^{(n)}(x), \qquad
{\mathcal Q}(x, \alpha) = \sum_{n=1}^{\infty}\alpha^n {\mathcal Q}^{(n)}(x), \qquad \text{and}
\\
T^{(n)}(x) = \int_{x_0}^x {\rm d}x_n A(x_n)
\int_{x_0}^{x_n }{\rm d}x_{n-1}A(x_{n-1}) \cdots \int_{x_0}^{x_2} {\rm d}x_{1} A(x_{1}). 
\end{gather*}
Comparing the $\alpha$ series expansion of $T(x, \alpha)$ and \smash{${\rm e}^{{\mathcal Q}(x, \alpha)}$}, we obtain the coefficients
\smash{${\mathcal Q}^{(n)}$} as symmetric polynomials of $T^{(n)}$ ($T^{(n)}$, \smash{${\mathcal Q}^{(n)}$} below depend on $x$)
\begin{gather}
 {\mathcal Q}^{(1)} = T^{(1)}, \qquad {\mathcal Q}^{(2)} = T^{(2)} -\frac{1}{2} \bigl(T^{(1)}\bigr)^2,\label{a00} \\
 {\mathcal Q}^{(3)} = T^{(3)} - \frac{1} {2} \bigl( T^{(1)} T^{(2)} + T^{(2)} T^{(1)}\bigr) +\frac{1}{3} \bigl(T^{(1)}\bigr)^3, \qquad\dots, \label{ac}
\end{gather}
and {vice-versa} all $T^{(m)}$'s can be expressed in terms of ${\mathcal Q}^{(m)}$'s,
\begin{gather}
T^{(1)} = {\mathcal Q}^{(1)} , \qquad T^{(2)} = {\mathcal Q}^{(2)} + \frac{1}{2} \bigl({\mathcal Q}^{(1)}\bigr)^2, \label{a}\\
T^{(3)} = {\mathcal Q}^{(3)} + \frac{1}{2} \bigl( {\mathcal Q}^{(1)} {\mathcal Q}^{(2)} +
 {\mathcal Q}^{(2)} {\mathcal Q}^{(1)}\bigr) +\frac{1}{3!} \bigl({\mathcal Q}^{(1)}\bigr)^3, \qquad \dots\,. \label{c}
\end{gather}

Every element \smash{${\mathcal Q}^{(m)}$} of the series expansion can be obtained
by means of a generic recursive formula (see, for instance,~\cite{Bur, Sal} and~\cite{review} and references therein) as follows.
Consider
\[
\Pi^{(n)}_k = \sum_{j_1+ j_2 + \dots+ j_k =n} T^{(j_1)} \cdots T^{(j_k)},
\]
which satisfies the recursion formula
\begin{equation}
\Pi^{(n)}_k = \sum_{m=1}^{n-k+1} \Pi^{(m)}_1 \Pi^{(n-m)}_{k-1} \qquad \mbox{and} \qquad
\Pi^{(n)}_1 = T^{(n)}, \quad \Pi^{(n)}_n = (T^{(1)})^n. \label{R2}
\end{equation}
Then ${\mathcal Q}^{(m)}$ are given as
\begin{gather}
{\mathcal Q}^{(m)} = T^{(m)} - \sum_{k=2}^m (-1)^k \frac{\Pi^{(m)}_k}{k}, \qquad m>1. \label{R1}
\end{gather}
See, for example, expressions~(\ref{a00}) and~(\ref{ac}) for $m=1, 2$ and $3$.

Hence, we obtain, after recalling~(\ref{R1}), (\ref{R2}) (or specifically~(\ref{a00}),~(\ref{ac})) and~(\ref{12b}), the explicit expressions for the first three terms of the series ({\it Magnus expansion}):
\begin{gather}
{\mathcal Q}^{(1)}(x) = \int_{x_0}^x {\rm d}x_1 A(x_1), \label{qb2a} \\
{\mathcal Q}^{(2)}(x) = \frac{1}{2} \int_{x_0}^x{\rm d}x_2 \int_{x_0}^{x_2} {\rm d}x_1 [A(x_2), A(x_1) ], \nonumber \\
 {\mathcal Q}^{(3)}(x) = \frac{1}{6} \int_{x_0}^x{\rm d}x_3 \int_{x_0}^{x_3} {\rm d}x_2 \int_{x_0}^{x_2} {\rm d}x_1 ( [ A(x_3), [ A(x_2), A(x_1) ] +
 [ [ A(x_3), A(x_2) ], A(x_1) ] ), \nonumber
\end{gather}
where $ [A, B ] := AB - BA$ is the familiar Lie commutator.
\begin{rem}
Magnus obtained in~\cite{Magnus} the general expression for ${\mathcal Q}(x)$ recursively as an infinite series involving Bernoulli's numbers (see also~\cite{review} and references therein). We define
\begin{equation*}
\mbox{ad}_AB = [ A, B ], \qquad \mbox{ad}^n_A = \bigl[A, \mbox{ad}^{n-1}_AB \bigr],
\qquad \mbox{ad}^0_AB = B,
\end{equation*}
and recall the Bernoulli numbers $B_n$ defined as \smash{$\sum_{n=0}^\infty \frac{B_n}{n!}z^n= \frac{z}{{\rm e}^z -1}$}.
Then ${\mathcal Q}(x)$ can be expressed in a compact form as
\begin{equation}
{\mathcal Q}(x) = \int_{x_0} ^x {\rm d}s \sum_{n=0}^{\infty} \frac{B_n}{n!}\mbox{ad}^n_{{\mathcal Q}(x)}A(s). \label{iter}
\end{equation}
Expressions~(\ref{qb2a}) can be obtained from~(\ref{iter}) by iteration. For a detailed discussion on Magnus expansion, convergence issues, expansion generators and applications the interested reader is referred, for instance, to~\cite{review, EbraMan, Fer} and references therein.
\end{rem}

After the brief review on Magnus expansion, we recall in the subsequent section some of the fundamental notions necessary of our analysis in Sections~\ref{section3} and~\ref{section4}. More precisely, we describe below what is achieved in each section.
\begin{itemize}\itemsep=0pt
\item In Section~\ref{section2.1}, we recall the definitions of Rota--Baxter, pre-Lie and tridendriform algebras and we then discuss the connections among these algebras. To illustrate these relations, we use two simple examples, which appear in literature and will be exploited anyway in this study. In Section~\ref{section2.2}, in order to further motivate the study of deep interconnections among seemingly distinct algebraic structures, we recall the passage from pre-Lie algebras to braces (radical rings)~\cite{Rump, Smok1, Smok2}.

 \item In Section~\ref{section3}, the rigorous derivation of the discrete analogue of Magnus expansion and discrete pre-Lie Magnus expansion are exhibited, after having first derived the discrete version of Dyson series taking into account contributions of extra higher-order terms in the formal series expansion in powers of $\alpha$ that do not appear neither in the continuous case nor in the linearized discrete version~\cite{EbraMan}, when dealing with iterated integrals or sums respectively. These derivations are realized by means of the discrete analogue of the evolution problem~(\ref{evo1}). Linearization of the discrete evolution problem leads naturally to the continuous equation~(\ref{evo1}). Furthermore, we extract the discrete Dyson series, which are expressed in terms of a tridendriform algebra binary operation, whereas the explicit discrete pre-Lie Magnus expansion is derived.
 We emphasize that the novelty in our identification is the presence of extra higher-order terms that never appear in the continuous or linear discrete case. These additional terms are crucial when identifying the $N$-coproducts of the Yangian as tridendriform and pre-Lie algebras. Construction of a brace multiplication from this pre-Lie algebra immediately follows. Explicit expressions of the first few elements of both expansions are provided. Taking the continuum limits of these expressions, we recover the continuous Magnus expansion and the known pre-Lie Magnus formula. Also the linear discrete expansion~\cite{EbraMan} is recovered in Remark~\ref{remark12}. In~Section~\ref{section3.1}, we consider alternative discrete Dyson and Magnus expansions, which again can be expressed using tridendriform and pre-Lie algebras. The various discrete expansions are associated to distinct quantum algebras as will be transparent in Section~\ref{section4}. In~Section~\ref{section3.2}, we present two basic examples/applications related to the backward and forward Dyson and Magnus expansions (see related equations~(\ref{discrete}) and~(\ref{discrete2}) later in the manuscript). The first example is related to gauge transformations of matrix valued fields, whereas the second one describes the discrete and continuous evolution problem when folding is also considered (see equation~(\ref{open})).

\item In Section~\ref{section4}, we investigate {interesting links} between quantum algebras, specifically the Yangian and tridendriform and pre-Lie algebras. We first recall the derivation of the Yangian via the Faddeev--Reshetikhin--Takhtajan (FRT) construction. We extract an alternative set of generators using the Lie exponential of the solutions of the FRT relation and derive the defining algebraic relations of the alternative set of generators. We then move on to study tensor realizations of the Yangian for both sets of generators and express coproducts of the Yangian generators using binary operations of pre-Lie and tridendriform algebras. The classical Yangian is briefly discussed after we introduce the notions of the classical $r$-matrix and Sklyanin's bracket~\cite{FT}. We conclude that the non-local charges of the classical Yangian are naturally expressed in terms of (tri)dendriform algebra operations.\looseness=-1
\end{itemize}

We note that part of this work was presented in the ``XL Workshop on Geometric Methods in Physics 2023'', Bia{\l}owie\.{z}a, Poland, and also appears in~\cite{DoikouBal}.

\section{Preliminaries: a short review}
\subsection{Rota--Baxter, pre-Lie and tridendriform algebras}\label{section2.1}

 This section serves as a brief review on various algebraic notions and interlinks that are required for our analysis here, providing also the needed basic background to readers who are not necessarily experts in these research areas. Indeed, we briefly recall the definitions of chronological, pre-Lie algebras~\cite{Chrono, PreLie} and the pre-Lie Magnus expansion, as well as the notion of Rota--Baxter algebras.
The various links between Rota--Baxter algebras, tridendriform and pre-Lie algebras as well as some useful, for our purposes, known propositions and examples are recalled. What follows has been taken from various sources within a rather large bibliography on the mentioned subjects as well as on connections with rooted trees, quasi-shuffle algebras and non-commutative stochastic calculus, enveloping pre-Lie algebras and Magnus expansion, etc. (we refer the interested reader for a more detailed account on these and related matters to~\cite{Bai, Ebra2, new5,Fer, EbraMan,new3, new4, Manchon} and~\cite{tree1, tree2} among others).

Before we comment on the various intriguing interconnections, we first introduce the definitions of Rota--Baxter and pre-Lie algebras~\cite{Rota, RBLoday, Fer, Rota2, Rota3}.
\begin{defn}
A Rota--Baxter algebra is a unital, associative $k$-algebra $\mathcal{A}$ equipped {with a~binary operation $m\colon \mathcal{A} \times \mathcal{A} \to \mathcal{A}$, $(a,b) \mapsto a b$,} and a linear map $R\colon \mathcal{A} \to \mathcal{A}$, such that for all~${a,b \in \mathcal{A}}$
\begin{equation}
R(a) R(b) = R\bigl(R(a) b + aR(b) + \theta ab\bigr), \label{RB}
\end{equation}
where $\theta \in k$ is a fixed parameter.
\end{defn}

The map $R$ is called a Rota--Baxter operator of weight $\theta$.

\begin{defn}
A left pre-Lie algebra is a $k$-vector space $A$ with a binary operation
$
\triangleright\colon (a,b) \mapsto a \triangleright b$ such that it satisfies the left pre-Lie identity $\forall a,b,c \in A$,
\begin{equation}
(a \triangleright b) \triangleright c- a \triangleright(b \triangleright c) = (b \triangleright a) \triangleright c- b \triangleright(a \triangleright c). \label{prelie}
\end{equation}
 Analogously, a right pre-Lie algebra with a binary operation $\triangleleft$ can be defined with a pre-Lie identity
\begin{equation*}
(a \triangleleft b) \triangleleft c- a \triangleleft (b \triangleleft c) = (a \triangleleft c) \triangleleft b- a \triangleleft(c \triangleleft b). 
\end{equation*}
\end{defn}

In the following proposition, it is shown that Rota--Baxter operators of weight $\theta$ can be used to construct a pre-Lie algebra (see~\cite{Fer, EbraMan}).
\begin{pro}\label{pro1-1}
Let ${\mathcal A}$ be a Rota--Baxter
algebra of weight $\theta$, whose Rota--Baxter operator is denoted by $R$. Let also $\triangleright$ be a
binary operation on ${\mathcal A}$ defined for all $a,b\in \mathcal{A}$ as
\begin{equation*}
 a\triangleright b : = [R(a), b ] + \theta a b. 
\end{equation*}
Then $(\mathcal{A}, \triangleright)$ is a left pre-Lie algebra.
\end{pro}
\begin{proof}
It suffices to show the left pre-Lie identity~(\ref{prelie}), indeed we compute
\begin{align}
(x\triangleright y) \triangleright z={}& R (R(x)y - y R(x) + \theta xy )z -z R (R(x)y - yR(x) +\theta xy ) \cr
&{}{+}\, \theta (R(x)yz - yR(x)z + \theta xyz ). \label{rel1}
\end{align}
Similarly, using~(\ref{RB})
\begin{align}
&x\triangleright (y \triangleright z) =R (R(x)y+x R(y) + \theta xy )z +z R (R(y)x +yR(x) +\theta yx ) \label{rel1b}\\
&\qquad{}+ \theta (R(x)yz - yzR(x) + xR(y)z -xzR(y) +\theta xyz ) -R(x)zR(y)- R(y)z R(x). \nonumber
\end{align}
And from the two last expressions, we conclude that
\begin{align*}
(x\triangleright y) \triangleright z - x\triangleright (y \triangleright z) ={}&{-}\,
R (R(x) y + xR(y) )z - zR (R(x) y + xR(y) )\cr
&{-}\,\theta z (R(xy) + R(yx) ) -
\theta (y R(x) + x R(y) ) \cr
&{+}\, \theta (y z R(x) + xzR(y) ) + R(x)zR(y) + R(y)z R(x). 
\end{align*}
The latter expression is symmetric in $x$ and $y$ and leads to the left pre-Lie algebra identity.
\end{proof}

We present two simple examples of Rota--Baxter operators
(other examples can be found, for instance, in
\cite{EbraMan, new3, new4}), which will be
used in our analysis.
\begin{exam} \label{ex0}
A simple example of Rota--Baxter operator is given by the ordinary Riemann integral, which is a weight
zero Rota--Baxter map. Indeed, let
$S(f)_x := \int_{0}^x f(\zeta) {\rm d}\zeta$, then
\begin{align*}
S(f)_x S(g)_x &= \int_{0}^x f(\zeta) \,{\rm d}\zeta \int_0^x g(\xi) \,{\rm d}\xi
= \int _0^x {\rm d}\zeta \int_0^{\zeta} {\rm d}\xi f(\zeta) g(\xi) + \int _0^x {\rm d}\zeta \int_0^{\zeta}{\rm d}\xi f(\xi) g(\zeta) \cr
&= S\bigl(S(f)_{\zeta}g_{\zeta} + f_{\zeta} S(g)_{\zeta} \bigr)_x,
\end{align*}
i.e., $S$ is a Rota--Baxter operator of weight zero.

Let $A$, $B$ be linear operators that depend on a continuous parameter $x$, we define
\begin{equation}
(A \triangleright B)(x) : = \biggl[ \int_0^x A(s) {\rm d}s, B(x) \biggr], \label{bina1}
\end{equation}
which provides a non-commutative binary operation, e.g., $A$ and $B$
can be matrix valued functions of $x$.
It turns out according to Proposition~\ref{pro1-1} that the binary operation
defined in~(\ref{bina1}) satisfies the pre-Lie identity~(\ref{prelie}).

The elements of the Magnus expansion can be re-expressed in terms of a pre-Lie algebra~-- {\it pre-Lie Magnus expansion}~-- (see, e.g.,~\cite{Fer, EbraMan, new3,new4}).
Using expressions
(\ref{qb2a}) and~(\ref{bina1}), we show for the few first terms:
\begin{gather}
{\mathcal Q}^{(1)}(x) = \int_{x_0}^x {\rm d}x_1 A(x_1), \cr
{\mathcal Q}^{(2)}(x) = - \frac{1}{2} \int_{x_0}^x {\rm d}x_2 (A\triangleright A )(x_2),\cr
 {\mathcal Q}^{(3)}(x) = \int_{x_0}^x{\rm d}x_3 \biggl(\frac{1}{4}((A \triangleright A )
\triangleright A)(x_3) + \frac{1}{12} (A \triangleright (A \triangleright A))
(x_3) \biggr). \label{qb2b}
\end{gather}
\end{exam}

\begin{rem} \label{class} There are two classes of linear operators that we are most interested in, especially in Sections~\ref{section3} and~\ref{section4}. We first recall the ${\mathcal N} \times {\mathcal N}$ matrices $e_{a,b}$ with elements $(e_{a,b})_{c,d} = \delta_{a,c}\delta_{b,d}$, $a,b,c,d \in \{1,2, \dots, {\mathcal N}\}$.
\begin{enumerate}\itemsep=0pt
\item \smash{$A_n = \sum_{a,b=1}^{\mathcal N} e_{a,b}(A_{a,b})_n \in \EEnd\bigl({\mathbb C}^{\mathcal N}\bigr)$}, where $(A_{a,b})_n \in {\mathbb C}$ depend on the discrete parameter~${n\in {\mathbb Z}^+}$.
\item Let \smash{$A = \sum_{a,b=1}^{\mathcal N} e_{a,b} \otimes A_{a,b} \in \EEnd\bigl({\mathbb C}^{\mathcal N}\bigr) \otimes {\mathfrak A}$}, where ${\mathfrak A}$ is in general some unital, associative, ${\mathbb C}$-algebra generated by indeterminates $A_{a,b}$, $a,b \in \{1,2,\dots, {\mathcal N}\}$. We then define (in the so-called ``index notation''), \smash{$A_n = \sum_{a,b=1}^{\mathcal N} e_{a,b} \otimes (A_{a,b})_n \in \EEnd\bigl({\mathbb C}^{\mathcal N}\bigr) \otimes {\mathfrak A}^{\otimes N}$}, where ${(A_{a,b})_n \in {\mathfrak A}^{\otimes N}}$ is defined as
\begin{equation}
(A_{a,b})_n := 1_{\mathfrak A} \otimes \dots \otimes 1_{\mathfrak A} \otimes A_{a,b} \otimes 1_{\mathfrak A}\otimes \dots \otimes 1_{\mathfrak A}, \label{not0}
\end{equation}
 $A_{a,b}$ is located on the $n$-th position of the $N$-tensor product of the algebra in~(\ref{not0}) (there are $N$ terms in the tensor product~(\ref{not0}), $1\leq n\leq N$. We note that ${\mathfrak A} \otimes {\mathfrak A}$ is equipped with its usual tensor product algebra
structure: $(a \otimes b)(c \otimes d) = ac \otimes bd$, for all $a, b, c, d \in {\mathfrak A}$. In the special case, where ${\mathfrak A}$ is a commutative algebra, (2)~is equivalent to~(1).
\end{enumerate}
\end{rem}

\begin{exam} \label{ex2}
 Let $f_m$, $g_m $ be linear operators of the type described in Remark~\ref{class}. Define also the sum, \smash{$\Sigma(f)_n : =
\sum_{m=1}^{n-1} f_m$} (see also, e.g.,~\cite{EbraMan}). Then{\samepage
\begin{equation}
\Sigma(f)_n\Sigma(g)_n = \Sigma(\Sigma(f)_mg_m + f_m \Sigma(g)_m +f_mg_m )_n, \label{rota}
\end{equation}
i.e., $\Sigma$ is a Rota--Baxter operator of weight one. This is the discrete analogue of Example~\ref{ex0}.}

Indeed, we rewrite equation~(\ref{rota}) with the suitable discrete variables:
\begin{align*}
\Sigma(f)_n \Sigma(g)_n:={}& \sum_{m=1}^{n-1} f_m \sum_{k=1}^{n-1} g_k = \sum_{m> k}f_m g_k +
\sum_{m< k} f_k g_m +\sum_{m=1}^{n-1} f_ng_n \\
={}&\sum_{m=1}^{n-1} (f_m \Sigma(g)_m + \Sigma(f)_m g_m +f_m g_m ) \\
={}& \Sigma(f_m \Sigma(g)_m+ \Sigma(f)_m g _m +f_mg_m )_n,
\end{align*}
and this concludes the proof.

Note that summation up to $n$ (instead of $n-1$) leads to a Rota--Baxter operator of weight~$-1$.
\end{exam}

\begin{cor} \label{corb} Let $A_m$, $B_m$ be linear operators of the type described in Remark~$\ref{class}$. Define the binary operation
$ \triangleright\colon (A, B) \mapsto A \triangleright B$ such that
\begin{equation}
(A\triangleright B)_n := \Biggl[ \sum_{m=1}^{n-1} A_m, B_n \Biggr] +A_n B_n. \label{bina2}
\end{equation}
Then the left pre-Lie identity is satisfied, i.e.,
\begin{equation*}
 ((A\triangleright B)\triangleright C )_n - (A\triangleright (B\triangleright C) )_n =
 ((B\triangleright A)\triangleright C )_n - (B\triangleright (A\triangleright C) )_n. 
\end{equation*}
\end{cor}
\begin{proof} The proof is immediate by means of Proposition~\ref{pro1-1} and Example~\ref{ex2}.
\end{proof}

As mentioned in the introduction, interesting connections between pre-Lie algebras, rooted tree graphs, tridendriform and Rota--Baxter algebras already well known (see, for instance, \mbox{\cite{Chap1, Fer}}), whereas links between tridendriform, Rota--Baxter and (quasi)shuffle algebras have been also studied in~\cite{LodaySh}. We shall briefly recall now the relation between pre-Lie algebras and (tri)-dendriform algebras.
\begin{defn}
A {\it tridendriform algebra} ${\mathcal D}$ is a $k$-vector space equipped with three binary operations $\prec$, $\succ$, $\cdot$ and the following axioms~\cite{Dendri, Loday}:
\begin{enumerate}\itemsep=0pt
\item[(1)] $(a \prec b) \prec c = a \prec (b\prec c + b\succ c + b\cdot c)$,
\item[(2)] $(a \succ b) \prec c = a \succ (b\prec c)$,
\item[(3)] $a \succ(b \succ c) = (a \prec b + a\succ b + a\cdot b)\succ c$,
\item[(4)] $a\cdot (b \cdot c) = (a\cdot b)\cdot c$,
\item[(5)] $(a\succ b) \cdot c = a\succ( b\cdot c)$,
\item[(6)] $(a\prec b) \cdot c = a\cdot(b\succ c)$,
\item[(7)] $(a\cdot b) \prec c = a\cdot (b\prec c)$.
\end{enumerate}
\end{defn}

 A {\it dendriform algebra} is defined by setting the product $\cdot$ to zero in the above axioms,
consequently the rules of a dendriform algebra are given in terms of axioms (1)--(3) without the~$\cdot$ term.

\begin{rem} Let ${\mathcal D}$ be a tridendriform algebra.
We first note that the binary operation, $*\colon {\mathcal D} \times {\mathcal D} \to {\mathcal D}$ such that
\[
x*y := x\prec y + x\succ y + x\cdot y
\]
for all $x,y \in {\mathcal D}$ is associative~\cite{Loday}. Moreover, $({\mathcal D}, \triangleright)$, $({\mathcal D}, \triangleleft)$ are left, right pre-Lie algebras respectively with
\[
x\triangleright y = x\succ y - y\prec x + x\cdot y, \qquad x\triangleleft y = x\prec y-y\succ x +x\cdot y
\]
for all $x,y \in {\mathcal D}$. In the case of a dendriform algebra, the term $x\cdot y$ is not present in all expression above, i.e., for all $x,y \in {\mathcal D}$
\[
x* y := x\prec y + x\succ y
\]
and
\[
x\triangleright y := x\succ y - y\prec x, \qquad x\triangleleft y := x\prec y-y\succ x .
\]

Let $\mathcal{A}$ be a Rota--Baxter algebra, then $(\mathcal{A},\succ,\prec, \cdot)$ is a tridendriform algebra with~\cite{RBLoday}
\begin{equation}
x\succ y := R(x) y, \qquad x\prec y := x R(y), \qquad a\cdot b = \theta ab \label{action2}
\end{equation}
for all $x,y\in \mathcal{A}$, which also immediately leads to the findings of Proposition~\ref{pro1-1}.
\end{rem}

\subsection{From pre-Lie algebras to braces}\label{section2.2}
We report in this section the recent findings on the relations between pre-Lie algebras and braces~\cite{Rump, Smok1, Smok2}, and in particular the passage from pre-Lie algebras to finite braces. Before we discuss this passage we recall the definition of a brace~\cite{Jesp, Rump1} and we note that braces were essentially introduced in order to derive set-theoretic solutions of the Yang--Baxter equation~\cite{Rump1}.
The already known relationships between braces, the Yang--Baxter equation and quantum integrability~\cite{DoiSmo2, DoiSmo1}, as well as the passage described below are expected to lead to even deeper associations and the study of possibly novel algebraic structures.

\begin{defn}
A {\it left brace} is a set $B$ together with two group operations $+,\circ\colon B\times B\to B$,
the first is called addition and the second is called multiplication such that $\forall a,b,c\in B$
\begin{equation*}
a\circ (b+c)=a\circ b-a+a\circ c.
\end{equation*}
\end{defn}

The additive identity of a left brace $B$ will be denoted by $0$ and the multiplicative identity by~$1$, and in every left brace $0=1$. Also, let $(N,+, \cdot)$ be an {associative, nilpotent} ring. For~${a, b \in N}$, define $a \circ b := a \cdot b + a + b$, then $(N,+, \circ )$ is a brace~\cite{Rump1}.

The group of formal flows constructed from a pre-Lie algebra was introduced in~\cite{Chrono} (see also, e.g.,~\cite{Manchon}).
We summarize below the passage from pre-Lie algebras {to finite braces}~\cite{Rump, Smok1, Smok2} after recalling the definition of the
group of formal flows~\cite{Chrono, Manchon}. We also assume as in~\cite{Smok1, Smok2} that ${\mathrm A}$ is a nilpotent pre-Lie algebra.
\begin{enumerate}\itemsep=0pt
\item Let $a\in {\mathrm A}$, and let $L_{a}\colon {\mathrm A}\rightarrow{\mathrm A}$ denote the left multiplication by $a$, so
$L_{a}(b):=a\triangleright b$, also $L_{c}(L_{b}(a))=c\triangleright (b\triangleright a)$,
 and
 \[
 {\rm e}^{L_{a}}(b) =b+a\triangleright b+{\frac 1{2!}}a\triangleright (a\triangleright b)+{\frac 1{3!}}a\triangleright (a\triangleright (a\triangleright b))+\cdots .
 \]

\item We formally consider the element ${\bf 1}$, such that ${\bf 1}\triangleright a=a\triangleright {\bf 1}=a$ in the pre-Lie algebra (as in~\cite{Manchon}) and
define
\[
W(a):={\rm e}^{L_{a}}({\bf 1})-{\bf 1}=a+{\frac 1{2!}}a\triangleright a+{\frac 1{3!}}a\triangleright (a\triangleright a)+ \cdots .
\]
 $W(a)\colon {\mathrm A}\rightarrow {\mathrm A}$ is a bijective function, provided that ${\mathrm A}$ is a nilpotent pre-Lie algebra.

\item Let $\Omega (a)\colon {\mathrm A}\rightarrow {\mathrm A}$ be the inverse function to the function $W(a)$, i.e.,
\[
\Omega (W(a))=W(\Omega (a))=a.
\]
Following~\cite{Manchon}, the first terms of $\Omega $ are
\[
\Omega(a)= a-{\frac 12}a\triangleright a +{\frac 14} (a\triangleright a)\triangleright a +{\frac {1}{12}}a\triangleright(a\triangleright a) +\cdots .
\]

\item We define the multiplication, \[a\circ b:=a+{\rm e}^{L_{\Omega (a)}}(b).\]
The addition is the same as in the pre-Lie algebra ${\mathrm A}$.
It was shown in~\cite{Chrono} that $({\mathrm A}, \circ )$ is a~group. It is then straightforward to show that $({\mathrm A}, \circ , + )$ is a left brace, indeed
 \begin{align*}
 a\circ (b+c)+a&=a+{\rm e}^{L_{\Omega (a)}}(b+c)+a=\bigl(a+{\rm e}^{L_{\Omega (a)}}(b)\bigr)+\bigl(a+{\rm e}^{L_{\Omega (a)}}(c)\bigr)\\
 &=a\circ b+a\circ c.
 \end{align*}
\end{enumerate}

The above formula can also be written using the Baker--Campbell--Hausdorff (BCH) formula, (see~\cite{Chrono, Manchon}). We first recall that the Lie algebra $L({\mathrm A})$ is obtained from a pre-Lie algebra $A$ by defining $[a,b]=a\triangleright b-b\triangleright a$ (with the same addition as ${\mathrm A}$).
By means of the BCH formula, \smash{${\rm e}^{L_a}\bigl({\rm e}^{L_b}({\bf 1})\bigr)= {\rm e}^{L_{C(a,b)}}({\bf 1})$}, the element $C(a, b)$ can be represented in the form of a series as
\[
C(a,b)=a+b+{\frac 12}[a,b]+{\frac 1{12}}([a,[a,b]]+[b,[b,a]])+\cdots.
\]

\begin{lem}The following formula for the multiplication $\circ $ defined above holds $($see, for example,~\mbox{\rm \cite{Chrono, Manchon})}:
\[
W(a)\circ W(b)= W(C(a, b)),
\]
where $C(a, b)$ is obtained using the BCH series in the Lie algebra $L({\mathrm A})$.
\end{lem}
\begin{proof} The proof is immediate from the definition of the brace multiplication
\begin{align*}
W(a) \circ W(b) & = W(a) + {\rm e}^{L_{\Omega(W(a))}}(W(b)) = {\rm e}^{L_{a}}({\bf 1})-{\bf 1}
+ {\rm e}^{L_{a}}\bigl({\rm e}^{L_{b}}({\bf 1})-{\bf 1}\bigr)\cr
& = {\rm e}^{L_{a}}\bigl({\rm e}^{L_b}({\bf 1})\bigr)-{\bf 1}={\rm e}^{L_{C(a,b)}}({\bf 1})-{\bf 1} = W(C(a, b)). \nonumber
\tag*{\qed}
\end{align*}
\renewcommand{\qed}{}
\end{proof}

Some general comments are also in order here. In general, tridendriform algebras yield left pre-Lie algebras, but they also gives rise to so-called post-Lie algebras~\cite{p[2], p[3],p[5]}. The notion of post-Lie Magnus expansion, thoroughly investigated during the last decade, should also be related to the discrete Magnus expansion. For a
detailed account on post-Lie algebras, the interested reader is referred to~\cite{p[2], p[3], Ebf}. Post-Lie algebras are also related to skew-braces~\cite{p[4]} (alias post-groups~\cite{p[1]}) the same way pre-Lie algebras are related to braces.

\section[Discrete Magnus expansion and discrete pre-Lie Magnus expansion]{Discrete Magnus expansion\\ and discrete pre-Lie Magnus expansion}\label{section3}

In this section, the rigorous derivation of the discrete analogue of Magnus expansion and the discrete pre-Lie Magnus expansion are presented. These derivations are based on the discrete analogue of the evolution problem~(\ref{evo1}). We also extract the discrete Dyson series, which as expected from the continuum case are expressed in terms of a tridendriform {algebra}, whereas the discrete Magnus expansion is expressed in terms of a pre-Lie algebra. We should note that the discrete linearized Dyson and Magnus expansions were first derived in~\cite{EbraMan}. In fact, as discussed in Remark~\ref{remark12} the linearized version of our generic expressions reduces to the discrete expressions appearing in~\cite{EbraMan}. Moreover, the characterization of the summation operation as a Rota--Baxter operator together with other useful identities are provided in~\cite{EbraMan}, and this is utilized here as will be transparent in this section ({see Example~\ref{ex2} and Corollary~\ref{corb}}).

The novelty in our derivation concerns the presence of ``higher-order terms'' that never appear in the continuous case or the discrete linearized version derived in~\cite{EbraMan}, when dealing with iterated integrals or sums. These extra terms are due to the choice of the linear operator in the discrete evolution problem and are essential for expressing the $N$-coproducts of the Yangian generators as tridendriform and pre-Lie algebras as described in Section~\ref{section4}, which is one of our key derivations here.
We note that the detailed proof on the continuum
limit of the solution of the discrete evolution problem and the elimination of the extra terms when taking the continuum limit is based on a ``power counting'' argument and is given in~\cite{AvanDoikouSfetsos} (see also Remark~\ref{remcont}).

In Section~\ref{section1}, we recalled the solution of the linear evolution problem expressed as Dyson and Magnus series and we also recalled the pre-Lie Magnus expansion. We are now focusing on the discrete evolution problem and the derivation of the discrete analogue of the Magnus expansion.

Let
\[
{\mathbb L}(\alpha)= 1 + \sum_{m>0} \alpha^m L^{(m)} \in \EEnd\bigl({\mathbb C}^{\mathcal N}\bigr) \otimes {\mathfrak A},
\]
where $\alpha\in {\mathbb C}$,
\[
L^{(m)} = \sum_{a,b =1}^{{\mathcal N}} e_{a,b} \otimes L_{a,b}^{(m)} \in \EEnd\bigl({\mathbb C}^{\mathcal N}\bigr) \otimes {\mathfrak A}
\]
 and ${\mathfrak A}$ is some unital, associative, ${\mathbb C}$-algebra with generators \smash{$L_{a,b}^{(m)}$}, $a,b \in \{1,2 ,\dots ,{\mathcal N}\}$, $m \in {\mathbb Z}^+$ (recall also Remark~\ref{class})). In order to derive the discrete analogue of Dyson and Magnus series, we consider the discrete evolution problem
\begin{equation}
T_{n+1}(\alpha) = {\mathbb L}_n(\alpha) T_{n}(\alpha).\label{discrete}
\end{equation}
We call the discrete evolution problem~(\ref{discrete}) {\it forward}.
The solution of the difference equation above is found by iteration, and is given by {the so-called forward monodromy matrix (or just monodromy), and is the discrete analogue of~(\ref{eq:CTS_T})} (see also Figure~\ref{fig1})
\begin{equation}
 T_{N+1}(\alpha) = {\mathbb L}_{N}(\alpha) \cdots {\mathbb L}_{1}(\alpha
 ) \in \EEnd\bigl({\mathbb C}^{\mathcal N}\bigr) \otimes {\mathfrak A}^{\otimes N} \label{monoad}
 \end{equation}
(let us choose for simplicity $T_1(\alpha) = 1$ as initial condition; see also Figure~\ref{fig1} below).

\begin{figure}[!ht]\centering\vspace{10mm}

\begin{picture}(1,1)
	\put(-90,0){\line(1,0){180}}
	\multiput(-75,30)(0,-8){8}{\line(0,-1){6}}
	\multiput(-45,30)(0,-8){8}{\line(0,-1){6}}
	\multiput(75,30)(0,-8){8}{\line(0,-1){6}}
	\put(-15,-20){$\cdots\cdots\cdots\cdots$}
	\put(-78,-50){$N$}
	\put(-48,-50){$N-1$}
 \put(0, 10){${\mathbf \longrightarrow}$}
	\put(72,-50){$1$}
	\put(-92,5){$ $}
\end{picture}

\vspace{15mm}

\caption{Graphical representation of the monodromy matrix $T_{N+1}$.}\label{fig1}
\end{figure}

The horizontal line in the diagram above represents the vector space ${\mathbb C}^{\mathcal N}$, whereas each intermittent line represents a copy of the algebra ${\mathfrak A}$; this is in agreement with the index notation introduced in (2) in Remark~\ref{class}. Figure~\ref{fig1} reads from left to right following the arrow (path ordered).

In the frame of classical and quantum integrable systems, the trace of the monodromy matrix generates a hierarchy of conserved quantities called Hamiltonians that describe integrable systems with periodic boundary conditions (see, for instance,~\cite{FT, FadTakRes}, see also Remark~\ref{tt} later in the manuscript).

Let also $T_{N+1}$ be expressed as a formal power series \smash{$T_{N+1}(\alpha) =1 + \sum_{m>0} \alpha^n T^{(m)}(N+1)$}, then by considering the generic form \smash{${\mathbb L}(\alpha) = 1 + \sum_{k>1} \alpha^k L^{(k)}$} we obtain the coefficients of the monodromy matrix expansion via~(\ref{monoad}):
\begin{gather}
T^{(m)}(N+1)= \sum_{\sum_{j=1}^km_j=m}\Biggl(\sum_{n=1}^N L_n^{(m_k)} \sum_{n_{k-1}=1}^{n-1 }L^{(m_{k-1})}_{n_{k-1}} \dots \sum_{n_{1}=1}^{n_2-1 }L^{(m_1)}_{n_1}\Biggr). \label{tt2}
\end{gather}
In the following proposition, we express the elements of the discrete analogue of
Dyson's series~\eqref{tt2} in terms of the tridendriform algebra.

\begin{pro}
{\label{lemma01}} The elements of the discrete analogue of Dyson series
\eqref{tt2} are expressed as
\begin{gather}
T^{(m)}(N+1) = \sum_{\sum_{j=1}^km_j=m}\sum_{n=1}^N \bigl(L^{(m_k)}\prec \bigl(L^{(m_{k-1})} \prec \bigl( \cdots \prec\bigl( L^{(m_2)}\prec L^{(m_1)}\bigr)\bigr)\cdots \bigr) \bigr)_n , \label{dendr2}
\end{gather}
where the tridendriform operation $\prec$ is defined as \smash{$(a\prec b)_n:= a_n \Sigma(b)_n$} and \smash{$\Sigma(b)_n = \sum_{m=1}^{n-1}b_m$}, \smash{$a,b \in \EEnd\bigl({\mathbb C}^{\mathcal N}\bigr) \otimes {\mathfrak A}$} $($see also Remark~$\ref{class})$.
\end{pro}
\begin{proof}
Using the definitions of $\prec$ and $\Sigma$ and the explicit expressions of $T^{(m)}$, $m \in \{1,2, \dots, N\}$, (\ref{tt2}), we write
\begin{equation}
T^{(m)}(N+1) =\sum_{\sum_{j=1}^km_j=m} \sum_{n=1}^N L^{(m_k)}_n \Sigma\bigl(L^{(m_{k-1})}\Sigma\bigl(L^{(m_{k-2})}\Sigma\bigl(\cdots \Sigma\bigl(L^{(m_1)}\bigr)\bigr)\cdots \bigr)\bigr)_n. \label{sigma}
\end{equation}
Via \smash{$(x\prec y)_n=x_n \Sigma(y)_n$}, expression~(\ref{sigma}) leads to~(\ref{dendr2}).
\end{proof}

For instance, expressions $T^{(m)}(N+1)$~(\ref{dendr2}) become: for $m=1$,
\[
T^{(1)}(N+1)= \sum_{n=1}^N L^{(1)}_n,
\]
for~${m=2}$
\begin{align*}
T^{(2)}(N+1) &= \sum_{n=1}^N \Biggl(L^{(1)}_n \sum_{m=1}^{n-1} L^{(1)}_m + L^{(2)}_n \Biggr)= \sum_{n=1}^N \Biggl(L^{(1)}_n \Sigma\bigl(L^{(1)}\bigr)_n + L^{(2)}_n \Biggr)\cr
&=
 \sum_{n=1}^N \bigl( \bigl(L^{(1)} \prec L^{(1)}\bigr)_n + L_n^{(2)}\bigr)\nonumber
\end{align*}
for $m=3$,
\begin{align*}
T^{(3)}(N+1) &= \sum_{n=1}^N\Biggl( L^{(1)}_n \sum_{m=1}^{n-1} L^{(1)}_m \sum_{k=1}^{m-1} L^{(1)}_k + L^{(1)}_n \sum_{m=1}^{n-1} L^{(2)}_m + L^{(2)}_n \sum_{m=1}^{n-1}L^{(1)}_m + L_n^{(3)}\Biggr)\cr
&= \sum_{n=1}^N \bigl(L^{(1)}_n \Sigma\bigl(L^{(1)}\Sigma\bigl(L^{(1)}\bigr)\bigr)_n +L^{(1)}_n \Sigma\bigl(L^{(2)}\bigr)_n +L_n^{(2)}\Sigma\bigl(L^{(1)}\bigr)_n +L_n^{(3)} \bigr)
\cr
&=
 \sum_{n=1}^N \bigl( \bigl(L^{(1)} \prec \bigl(L^{(1)}\prec L^{(1)}\bigr)\bigr)_n +\bigl(L^{(1)}\prec L^{(2)} \bigr)_n +
\bigl(L^{(2)} \prec L^{(1)}\bigr)_n + L_n^{(3)}\bigr). \nonumber
\end{align*}

We come now to the precise derivation of the discrete Magnus expansion. We consider the Lie exponential \smash{$T_{N+1}(\alpha) = {\rm e}^{{\mathcal Q}_{N+1}(\alpha)}$}, \smash{${\mathcal Q}_{N+1}(\alpha) = \sum_{m=1}^{\infty}\alpha^m {\mathcal Q}^{(m)}(N+1)$}, which will lead to the discrete analogue of Magnus expansion; expressions~(\ref{a00}),~(\ref{ac}),~(\ref{R1}) and (\ref{R2}) hold.

\begin{lem}\label{lemma2} Let
\[
T_{N+1}(\alpha) = {\rm e}^{{\mathcal Q}_{N+1}(\alpha)}, \qquad T_{N+1}(\alpha)= 1+ \sum_{m=1}^{\infty}\alpha^m T^{(m)}(N+1),
\]
 where $T^{(m)}({N+1})$ are given in Proposition~$\ref{lemma01}$ and \smash{${\mathcal Q}_{N+1}(\alpha) = \sum_{m=1}^{\infty}\alpha^m {\mathcal Q}^{(m)}(N+1)$}. The quantities
\[
{\mathcal Q}^{(k)}(N+1):= \sum_{n=1}^N \Omega^{(k)}_n,\]
where \smash{$\Omega^{(k)}_n = {\mathcal Q}^{(k)}(n+1) - {\mathcal Q}^{(k)}(n)$}, are expressed explicitly as
\begin{gather}
{\mathcal Q}^{(1)}(N+1) = \sum_{n=1}^{N} L^{(1)}_{n},\nonumber\\
{\mathcal Q}^{(2)}(N+1) = \frac{1}{2}\sum_{n>n_1=1}^{N} \bigl[ L^{(1)}_{n}, L^{(1)}_{n_1} \bigr] -
\frac{1}{2}\sum_{n=1}^{N} \bigl(L^{(1)}_{n}\bigr)^2+\sum_{n=1}^NL_n^{(2)},
\nonumber\\
{\mathcal Q}^{(3)}(N+1) = \sum_{n=1}^N \Biggl(\frac{1}{6} \sum_{n_2>n_1=1}^{n-1}
\bigl(\bigl[L^{(1)}_{n}, \bigl[L^{(1)}_{n_2}, L^{(1)}_{n_1} \bigr] \bigr] +
\bigl[\bigl[L^{(1)}_{n}, L^{(1)}_{n_2} \bigr], L^{(1)}_{n_1} \bigr]\bigr) \nonumber\\
\hphantom{{\mathcal Q}^{(3)}(N+1) = }{}
 + \frac{1}{6} \sum_{n_1=1}^{n-1} \bigl( L^{(1)}_{n_1}\bigl[L^{(1)}_{n_1},
L^{(1)}_{n}\bigr] + \bigl[L^{(1)}_{n_1}, L^{(1)}_{n}\bigr] L^{(1)}_{n}\bigr) \nonumber\\
\hphantom{{\mathcal Q}^{(3)}(N+1) = }{} +
\frac{1}{6} \sum_{n_1=1}^{n-1} \bigl(\bigl[L^{(1)}_{n_1}, \bigl(L^{(1)}_{n}\bigr)^2\bigr] +
\bigl[\bigl(L^{(1)}_{n_1}\bigr)^2, L^{(1)}_{n}\bigr] \bigr)-\frac{1}{2} \sum_{n=1}^{N} \bigl(L_n^{(1)} L_n^{(2)} +
L_n^{(2)} L_n^{(1)} \bigr)\nonumber\\
\hphantom{{\mathcal Q}^{(3)}(N+1) = }{}
+ \frac{1}{3} \bigl(L^{(1)}_{n}\bigr)^3 - \frac{1}{2} \sum_{m=1}^{n-1}
\bigl( \bigl[L^{(1)}_m, L_n^{(2)} \bigr] +\bigl[L^{(2)}_m, L^{(1)}_n \bigr] \bigr) + L_n^{(3)} \Biggr), \qquad \dots\,. \label{QQ}
\end{gather}
\end{lem}
\begin{proof}
{Recursive expressions~(\ref{R1}) and~(\ref{R2}) apparently still hold, but now $T^{(m)}$ are given by~(\ref{tt2}).
Notice that in the discrete case both $T_{N+1}(\alpha)$ and $T^{(m)}(N+1)$ depend on a discrete parameter~$N$,
which replaces the continuum parameter $x$ of the continuous analogue discussed in Section~\ref{section1}. The quantities \smash{$\Omega_n^{(k)}$} can be immediately read from~(\ref{QQ}), and are the discrete analogues of the derivatives \smash{$\dot {\mathcal Q}^{(k)}(x)$}, of Magnus expansion in Section~\ref{section1}.}
\end{proof}

\begin{pro} \label{prom}
The elements of the discrete Magnus expansion~\eqref{QQ} can be re-expressed in terms of the pre-Lie operation~\eqref{bina2} as $(${\it discrete pre-Lie Magnus expansion}$)$
\begin{gather}
 {\mathcal Q}^{(1)}(N+1) = \sum_{n=1}^NL^{(1)}_n, \nonumber\\
 {\mathcal Q}^{(2)}(N+1) =-\frac{1}{2} \sum_{n=1}^N
 \bigl(L^{(1)} \triangleright L^{(1)}\bigr)_n +\sum_{n=1}^NL_n^{(2)},\nonumber\\
 {\mathcal Q}^{(3)}(N+1)= \sum_{n=1}^N\biggl(
\frac{1}{4} \bigl(\bigl(L^{(1)} \triangleright L^{(1)}\bigr) \triangleright L^{(1)}\bigr)_n +\frac{1}{12}
\bigl(L^{(1)} \triangleright \bigl(L^{(1)} \triangleright L^{(1)}\bigr)\bigr)_n \biggr)\nonumber\\
\hphantom{ {\mathcal Q}^{(3)}(N+1)=}{}
-\frac{1}{2} \sum_{n=1}^N\bigl(\bigl(L^{(2)} \triangleright L^{(1)}\bigr)_n+ \bigl(L^{(1)}\triangleright L^{(2)}\bigr)_n\bigr)
+ \sum_{n=1}^N L_n^{(3)},\qquad \dots\,. \label{pre1}
\end{gather}
\end{pro}
\begin{proof}
We recall the definition of the pre-Lie binary operation~(\ref{bina2}), then \big($x \in\bigl\{ L^{(1)}, L^{(2)}, \dots\bigr\}$\big)
\begin{gather}
(x \triangleright x) _n= \Biggl[\sum_{m=1}^{n-1}x_m, x_n \Biggr] +
x_n^2, \label{tria1}
\end{gather}
also, from expressions~(\ref{rel1}) and~(\ref{rel1b}), for $R \to \Sigma$, and by setting $x=y=z$, we obtain
\begin{align}
\frac{1}{4}((x \triangleright x) \triangleright x)_n +
\frac{1}{12} (x \triangleright (x \triangleright x))_n ={}& \frac{1}{6} \sum_{n_2>n_1=1}^{n-1}
 ([x_{n}, [x_{n_2}, x_{n_1} ] ] +
[[x_{n}, x_{n_2} ], x_{n_1} ]) \cr
&{+}\, \frac{1}{6} \sum_{n_1=1}^{n-1} ( x_{n_1}[x_{n_1},\
x_{n}] + [x_{n_1}, x_{n}] x_{n} ) \cr
&{+}\, \frac{1}{6} \sum_{n_1=1}^{n-1} \bigl(\bigl[x_{n_1}, x^2_{n}\bigr] +
\bigl[x^2_{n_1}, x_{n}\bigr] \bigr) +
\frac{1}{3} x^3_{n}. \label{tria2}
\end{align}
Comparing expressions~(\ref{QQ}) with~(\ref{tria1}) and~(\ref{tria2}), we arrive at expressions~(\ref{pre1}).
These expressions provide elegant discrete analogues of the pre-Lie Magnus expansion. Higher-order terms~${\mathbb Q}^{(m)}$,~${m>3}$, can be computed by iteration via~(\ref{R1}) and~(\ref{R2}).
We only compute indicatively the first few terms of the expansion. Using the recursive relations~(\ref{R1}) and~(\ref{R2}),
expressions~(\ref{tt2}) and~(\ref{QQ}) and definition~(\ref{tria1}), we may explicitly compute higher-order terms; however, such
computations become increasingly complicated due to the existence of extra terms coming from the higher-order terms $L^{(m)}$, $m>1$,
in the formal expansion of ${\mathbb L}$~(\ref{discrete}) (see, for instance, the last term in ${\mathcal Q}^{(2)}$ the last two terms in line three ${\mathcal Q}^{(3)}$ and the last three terns in the last line of
${\mathcal Q}^{(3)}$,~(\ref{QQ})).
\end{proof}
\begin{rem}{\label{remark12}} It is worth focusing on the simple linear case, where
\[
{\mathbb L}(\alpha) = 1 + \alpha {\mathbb P} \in \EEnd\bigl({\mathbb C}^{\mathcal N}\bigr) \otimes {\mathfrak A}.
\]
The linear case is clearly a special example of the general scenario studied above, where we considered the formal series expansion, \smash{${\mathbb L}(\alpha) =1 + \sum_{m=1}^{\infty}\alpha^m L^{(m)} \in \EEnd\bigl({\mathbb C}^{\mathcal N}\bigr) \otimes {\mathfrak A};$} indeed in the linear case, $L^{(1)} = {\mathbb P}$ and $L^{(m)} =0$, $\forall m>1$. In this case the generic expressions of Propositions~\ref{lemma01} and~\ref{prom} reduce to the linearized discrete expressions obtained in~\cite{EbraMan}.
Indeed, we obtain the following explicit expressions, which provide a simpler discrete analogue of Dyson's series
\begin{gather}
T_{N+1}(\alpha) = 1 + \sum_{m=1}^N\alpha^{m} T^{(m)}(N+1), \qquad T^{(m)}(N+1) =
\sum_{n_m>\dots>n_1=1}^{N} {\mathbb P}_{n_m} \cdots {\mathbb P}_{n_1}. \label{asyt}
\end{gather}
According to Proposition~\ref{lemma01}, the elements of the discrete Dyson series~(\ref{asyt})
are expressed in terms of the tridendriform operation $\prec $ as
\begin{gather}
T^{(m)}(N+1) = \sum_{n =1}^N ({\mathbb P}\prec ({\mathbb P}\prec
({\mathbb P} \prec (\dots \prec({\mathbb P} \prec {\mathbb P})) \cdots )) )_n, \label{dendr1}
\end{gather}
with $m$ ${\mathbb P}$-terms. We also recall that \smash{$T := {\rm e}^{{\mathcal Q}}$}, then via~(\ref{a00}) and~(\ref{ac})
we immediately obtain the formulas for ${\mathcal Q}^{(m)}(N+1)$, given the findings of the generic case,
and the pre-Lie discrete Magnus expansion in this case takes the simple form resembling structurally the continuum case,
\begin{gather}
 {\mathcal Q}^{(1)}(N+1) = \sum_{n=1}^N{\mathbb P}_n, \\
 {\mathcal Q}^{(2)}(N+1) =-\frac{1}{2} \sum_{n=1}^N ( {\mathbb P} \triangleright {\mathbb P})_n, \nonumber\\
 {\mathcal Q}^{(3)}(N+1)= \sum_{n=1}^N\biggl(
\frac{1}{4} (({\mathbb P} \triangleright {\mathbb P}) \triangleright {\mathbb P} )_n + \frac{1}{12} ({\mathbb P} \triangleright ({\mathbb P} \triangleright {\mathbb P}) )_n \biggr).\label{prelie2b}
\end{gather}
Higher-order terms are obtained via~(\ref{R1}) and~(\ref{R2}). Expressions~(\ref{prelie2b}) are much more concise compared to the general ones~(\ref{pre1}), and apparently similar to the corresponding continuous formulas,
given that {terms that contain higher-order elements $L^{(m)}$, $m>1$, in the general case} are missing in the linear scenario.
\end{rem}

We should note that we have not been able to produce general expressions neither of the type~(\ref{iter}) nor in terms of pre-Lie algebras, due to the existence of the higher-order terms~$L^{(m)}$, ${m>1}$ in the expansion of \smash{${\mathbb L}(a) = 1 + \sum_{m\geq 1} \alpha^m L^{(m)}$}. However, in the linear case, where ${{\mathbb L}(a) = 1+ \alpha {\mathbb P}}$ (discussed in Remark~\ref{remark12}) such expressions are available in~\cite{Fer, EbraMan}. This is a~very interesting open problem for the general case, which we hope to address in a future work.

\begin{rem}[the continuum limit]\label{remcont}
Recall the discrete
evolution problem~(\ref{discrete}), and consider ${\mathbb L}(\alpha) \in \EEnd\bigl({\mathbb C}^{\mathcal N}\bigr)$,
then rescale $\alpha \to \delta \alpha$ ($\delta \ll 1 $) and consider the general form
${\mathbb L}(\lambda) = 1 + \smash{\sum_{m>0} \alpha^m \delta^n L^{(m)}}$. Expression~(\ref{discrete})
can be rewritten as (keeping only linear terms in~$\delta$)
\begin{equation*}
T_{n+1}(\alpha) = \bigl(1+ \alpha \delta L^{(1)}_{n+1} \bigr) T_n(\alpha).
\end{equation*}
By considering the following ``dictionary'' as $\delta \to 0$:
$L^{(1)}_{n+1} \to A(x)$, and \smash{$\frac{\Psi_{n+1} - \Psi_n}{\delta} \to \partial_{\xi} \Psi(\xi)$},
we arrive at the continuum limit of~(\ref{discrete}), which is the linear evolution problem
\begin{equation*}
\partial_{\xi} T(\xi, \alpha) = \alpha A(\xi) T(\xi, \alpha).
\end{equation*}
 Detailed proof on the continuum
limit of the discrete monodromy~(\ref{monoad}) (which gives the
continuum monodromy~(\ref{eq:CTS_T})), based on a ``power counting rule'' is given in~\cite{AvanDoikouSfetsos}.
The counting rule relies on the fact, \smash{$\delta \sum_{n=1}^N f_n \to \int_{0}^xf(\xi){\rm d}\xi$}, and terms of the form
\[
\delta^{\sum_{j=1}^m n_j}\sum_{k_1, k_2,\dots ,k_l} L_{k_1}^{(n_1)}\cdots L_{k_m}^{(n_{m})} \to 0
\]
in the continuum limit for $\sum_{j=1}^m n_j > m$.
\end{rem}

In the continuum limit, the monodromy matrix $T$ is expressed as a Dyson series with terms that are written, via~(\ref{asyt}) and~(\ref{dendr1}) in terms of a dendriform binary operation $ (A\prec B)(x) := \smash{A(x) \int_0^x B(\xi) {\rm d}\xi}$, as
\begin{eqnarray*}
T^{(m)}(x) =\int_{0}^x{\rm d}\zeta (A\prec (A\prec
(A\prec (\cdots \prec(A \prec A)) \cdots )))(\zeta).
\end{eqnarray*}

\subsection{An alternative discrete expansion}\label{section3.1}
 {Let us now consider a slightly different scenario, where the ${\mathbb L}$-operator in~(\ref{discrete}) is of the form ${\mathbb L}(\alpha) = {\mathrm M} + \alpha {\mathrm L}$, ${\mathrm M} \neq 1$ is invertible and
\[
{\mathrm M} = \sum_{a,b =1}^{\mathcal N} e_{a,b} \otimes M_{a,b} \in \EEnd\bigl({\mathbb C}^{\mathcal N}\bigr) \otimes {\mathfrak A},\qquad
{\mathrm L} = \sum_{a,b =1}^{\mathcal N} e_{a,b} \otimes L_{a,b} \in \EEnd\bigl({\mathbb C}^{\mathcal N}\bigr) \otimes {\mathfrak A},
\]
 where in this section ${\mathfrak A}$ is a unital, associative ${\mathbb C}$-algebra generated by $M_{a,b}$ and $L_{a,b}$ with ${a,b \in \{1,2,\dots, {\mathcal N}\}}$.}

Before we state the main findings in the next proposition, it is useful to introduce some notation. We set (recall that the index notation has been introduced in Remark~\ref{class})
\begin{enumerate}\itemsep=0pt
\item[(1)] ${\mathbb M}_{N+1, n-1}: = {\mathrm M}_N \cdots {\mathrm M}_{n} {\mathrm M}_{n-1}$,
\item[(2)] $\hat {\mathrm M} := {\mathrm M}^{-1}$ and $\hat {\mathbb M}_{n-1, N+1} =\hat {\mathrm M}_{n-1} \hat {\mathrm M}_{n} \cdots \hat {\mathrm M}_N$,
\item[(3)] ${\mathbb M}_{N+1, 1} :={\mathbb M}_{N+1}$ and $\hat {\mathbb M}_{1,N+1} := \hat {\mathbb M}_{N+1}$.
\end{enumerate}

\begin{pro}
Let ${\mathbb L}(\alpha)\in \EEnd\bigl({\mathbb C}^{\mathcal N}\bigr) \otimes {\mathfrak A}$ in~\eqref{discrete} be of the form ${\mathbb L}(\alpha) = {\mathrm M} + \alpha {\mathrm L}$, $\alpha \in {\mathbb C}$, where ${\mathrm M}\neq 1$, ${\mathrm L}\in \EEnd\bigl({\mathbb C}^{\mathcal N}\bigr) \otimes {\mathfrak A}$. Then the monodromy matrix $T_{N+1}(\alpha)$ can be expressed as
\begin{equation*}
T_{N+1}(\alpha) ={\rm e}^{{\mathcal Q}_{N+1}(\alpha)}{\mathbb M}_{N+1},
\end{equation*}
where \smash{$T_{N+1}(\alpha) = \sum_{n=0}^N \alpha^n T^{(n)}(N+1)$}, \smash{${\mathcal Q}_{N+1}(\alpha) = \sum_{m=1}^{N+1} \alpha^m {\mathcal Q}^{(m)}(N+1)$} is the discrete Magnus expansion as in Remark~$\ref{remark12}$ and the elements ${\mathcal Q}^{(m)}(N+1)$ are given in~\eqref{prelie2b} as the pre-Lie Magnus expansion.
\end{pro}
\begin{proof}

We start with the standard ordered expansion of the monodromy matrix, given the choice ${\mathbb L}(\alpha)= {\mathrm M} + \alpha {\mathrm L}$, we obtain
\begin{align*}
T_{N+1}(\alpha)={}& {\mathbb M}_{N+1} + \alpha\sum_{n=1}^N {\mathrm M}_{N} \cdots {\mathrm M}_{n+1} {\mathrm L}_n {\mathrm M}_{n-1} \cdots {\mathrm M}_1 \nonumber\\
&{}{+}\, \alpha^2 \sum_{n>m=1}^N {\mathrm M}_N \cdots {\mathrm M}_{n+1}{\mathrm L}_n {\mathrm M}_{n-1} \cdots {\mathrm M}_{m+1} {\mathrm L}_m {\mathrm M}_{m-1} \cdots {\mathrm M}_1 + \cdots
\end{align*}
and after defining
\begin{equation}
{\bf P}_n := {\mathbb M}_{N+1,n-1}{\mathrm L}_n \hat {\mathrm M}_n \hat {\mathbb M}_{n-1, N+1}, \label{alt0}
\end{equation}
we arrive at
\begin{gather}
T_{N+1}(\alpha) = \Biggl(1+ \sum_{m=1}^N\alpha^m \sum_{n_m>\dots >n_1=1}^N{\bf P}_{n_m} {\bf P}_{n_{m-1}} \cdots {\bf P}_{n_1} \Biggr){\mathbb M}_{N+1}, \label{alt}
\end{gather}
where ${\mathbb M}_n$ is derived in (1)--(3) in the beginning of the section.

The bracket in the expression above is just the discrete ordered expansion~(\ref{dendr1}), which as discussed in the previous section can be expressed as the discrete analogue of Magnus expansion, and also in terms of a suitable pre-Lie algebra.
{We note that the expression within the bracket in~(\ref{alt}) is similar to expression~(\ref{dendr1}), but with the objects
${\mathbb P}_n$ being now replaced by ${\mathrm P}_n$, defined~(\ref{alt0}).} Such expressions appear, for instance, in $q$-deformed algebras and in the context of algebras emerging from set-theoretic solutions of the Yang--Baxter equation. These cases however will be discussed in detail elsewhere.
\end{proof}

\subsection{Gauge {transformations} and ``folding''}\label{section3.2}

 In this subsection, we consider two fundamental applications
associated with the derivations of the previous section.
The first application is the transformation of the linear operator~${\mathbb L}$ ($A$~in the continuous case) via a gauge transformation,
and the second is the construction of the evolution problem and its solution when ``folding'' is included (see Figure~\ref{fig2}).

\subsubsection{Gauge transformations} Let $T_n$ be a solution of the difference equation
(\ref{discrete}), where ${\mathbb L}_n(\alpha) \in \EEnd\bigl({\mathbb C}^{\mathcal N}\bigr)$ (recall Remark~\ref{class}). Let also
${\mathrm G}_n(\alpha) \in \EEnd\bigl({\mathbb C}^{\mathcal N}\bigr)$ and define \smash{$\hat T_n(\alpha):= {\mathrm G}_n(\alpha) T_n(\alpha)$}, ${\mathbb L}, {\mathrm G} \in \EEnd\bigl({\mathbb C}^{\mathcal N}\bigr)$ are invertible.
Then
$\hat T_n$ is a solution of the difference equation
\begin{equation}
\hat T_{n+1}(\alpha) = \hat {\mathbb L}_n(\alpha) \hat T_n(\alpha), \label{discrete1}
\end{equation}
 where
$\hat {\mathbb L}_n(\alpha) = {\mathrm G}_{n+1}(\alpha) {\mathbb L}_n(\alpha) {\mathrm G}_{n}^{-1}(\alpha)$,
and hence
\begin{equation}
{\mathrm G}_{n+1}(\alpha) = \hat {\mathbb L}_n(\alpha) {\mathrm G}_n(\alpha) {\mathbb L}^{-1}_n(\alpha).\label{gauge1}
\end{equation}
 Assuming that the operators ${\mathbb L_n}$, $\hat {\mathbb L}_n$ are known, we solve for ${\mathrm G}_n$.

\begin{lem}
Let $T_{N+1} (\alpha): = {\mathbb L}_{N}(\alpha)\cdots {\mathbb L}_{1}(\alpha)$ and $\hat T_{N+1} (\alpha): = \hat {\mathbb L}_{N}(\alpha)\dots \hat {\mathbb L}_{1}(\alpha)$, where $L_n$, $\hat L_n \in\EEnd\bigl({\mathbb C}^{\mathcal N}\bigr)$, be solutions of~\eqref{discrete} and~\eqref{discrete1} respectively. Then, the solution of the difference equation~\eqref{gauge1} is given by \smash{${\mathrm G}_{N+1}(\alpha) = \hat T_{N+1}(\alpha) {\mathrm G}_1(\alpha) T^{-1}_{N+1}(\alpha)$}, where ${\mathrm G}_1(\alpha)$ is some generic initial value.
\end{lem}
\begin{proof} The solution of~(\ref{gauge1}) is obtained directly by iteration.
\end{proof}

Similarly, in the continuous case, let $T(x,\alpha)$ be a solution of the time evolution problem~(\ref{evo1}), where $A(x) \in \EEnd\bigl({\mathbb C}^{\mathcal N}\bigr)$. Let also $G(x, \alpha)\in \EEnd\bigl({\mathbb C}^{\mathcal N}\bigr)$ and define
$\hat T(x, \alpha) := {\mathrm G}(x, \alpha) T(x, \alpha)$. Then $ \hat T$ satisfies
\begin{equation}
\partial_{\xi} \hat T(\xi, \alpha)= \alpha \hat A(\xi) \hat T(\xi, \alpha). \label{evo2}
\end{equation}
From~(\ref{evo1}) and~(\ref{evo2}) and $\hat T(x, \alpha) := {\mathrm G}(x, \alpha) T(x, \alpha)$, we conclude that the transformed operator~$\hat A(x) $ is given as \begin{equation*}
\hat A(\xi) = {\mathrm G}(\xi, \alpha)A{\mathrm G}^{-1} (\xi, \alpha)+\alpha^{-1}\partial_{\xi} {\mathrm G}(\xi, \alpha){\mathrm G}^{-1}(\xi, \alpha), \nonumber
\end{equation*}
which is a {typical gauge transformation of $A$.}
Suppose that $A$, $\hat A$ are given, then from the latter equation we conclude that
${\mathrm G}(\xi,\alpha)$ is a solution of the evolution problem
\begin{equation}
\partial_{\xi} {\mathrm G}(\xi, \alpha) = \alpha\hat A(\xi) {\mathrm G}(\xi, \alpha) - \alpha {\mathrm G}(\xi, \alpha) A(\xi).\label{gaugeA}
\end{equation}
The solution of the latter equation is given by
\[
{\mathrm G}(x, \alpha) = \hat T(x, \alpha) {\mathrm G}_0(\alpha)T^{-1}(x, \alpha),
\]
 where we recall that, \smash{$T(x, \alpha )= \overset{\curvearrowleft} {\mathrm P} \mbox{exp}\bigl( \alpha \int_{0}^{x} A(\xi){\rm d}\xi \bigr)$} (similarly for $\hat T(x, \alpha)$) and ${\mathrm G}_0(\alpha)$ is some initial value at~${x =0}$.

This type of problems systematically appear in the context of integrable systems, where a~Lax pair and strong compatibility conditions (zero curvature condition, i.e., equations of motion) exist~\cite{AKNS, FT, Lax}. In this frame, so-called dressing schemes~\cite{DS, MS, ZS} are used in order to obtain solutions of the associated integrable non-linear ODEs and PDEs that emerge from the zero curvature condition.

Typically, in integrable systems, due to the existence of a Lax pair or the existence of a~classical or quantum $R$-matrix~\cite{FT, FadTakRes} it turns out that $\operatorname{tr} T(\alpha)$ provides a hierarchy of conserved quantities; for instance, the Hamiltonian of the system with periodic boundary conditions is a~member of this hierarchy (a more detailed discussion will follow later in the text). In the next section, we are considering both quantum and classical integrable systems, and we establish fundamental connections with the findings of the present section. In what follows a relevant construction, which is associated with a time evolution problem that includes a ``folding'' is discussed.

\subsubsection{Time evolution with ``folding''}
We now focus on the evolution problem that involves a ``folding''. Such evolution problems are also associated with integrable systems with open boundary conditions (instead of periodic ones), although we are not going to discuss here the integrability conditions for such systems (the interested reader is referred to~\cite{Sklyanin, Sklyaninb} for a detailed exposition).

In order to describe discrete systems with a ``folding'', we also need to introduce the so called backward time evolution equation. In order to achieve this, we need to introduce the operator
\[
\hat {\mathbb L}(a)= 1 + \sum_{m>0} \alpha^m \hat L^{(m)} \in \EEnd\bigl({\mathbb C}^{\mathcal N}\bigr) \otimes {\mathfrak A},
\]
where $a\in {\mathbb C}$, $\hat L^{(m)}$ are linear operators
\[
\hat L^{(m)} = \sum_{q,b =1}^{{\mathcal N}} e_{a,b} \otimes \hat L_{a,b}^{(m)} \in \EEnd\bigl({\mathbb C}^{\mathcal N}\bigr) \otimes {\mathfrak A}
\]
 and ${\mathfrak A}$ is some unital, associative, ${\mathbb C}$-algebra with generators \smash{$\hat L_{a,b}^{(m)}$}, $a,b \in \{1,2 \dots ,{\mathcal N}\}$, $m \in {\mathbb Z}^+$ (recall also Remark~\ref{class}).
Then the {\it backward} time evolution equation is defined as
\begin{equation}
\hat T_{n+1}(\alpha) = \hat T_{n}(\alpha) \hat {\mathbb L}_n(\alpha). \label{discrete2}
\end{equation}
The solution of equation~(\ref{discrete2}) is found by iteration and is given by the backward monodromy matrix
\begin{equation}
\hat T_{N+1}(\alpha) = \hat {\mathbb L}_{1}(\alpha) \dots \hat {\mathbb L}_{N}(\alpha) \in \EEnd\bigl({\mathbb C}^{\mathcal N}\bigr) \otimes {\mathfrak A}^{\otimes N}, \label{back}
\end{equation}
(we choose for simplicity $\hat T_1(\alpha) = 1$ as initial condition).
We express the monodromy as
\[
\hat T_{N+1}(\alpha)=1 + \sum_{n\geq 1} \alpha^n \hat T^{(n)}(N+1)
\]
and recall the exponential map
\smash{$\hat T_{N+1}(\alpha) = {\rm e}^{\hat {\mathcal Q}_{N+1}(\alpha)}$};
{expressions}~(\ref{a00}),~(\ref{ac}),~(\ref{R1}) and (\ref{R2}) naturally hold.

We consider the general case, where \smash{$\hat {\mathbb L}(\alpha) = 1 + \sum_{k \geq 1} \alpha^k \hat L^{(k)}$}, then the form~of the coefficients of the monodromy \smash{$\hat T$} are given via~(\ref{back}) as
\begin{gather}
\hat T^{(m)}(N+1)= \sum_{\sum_{j=1}^km_j=m}\Biggl( \sum_{n_{k}=1}^{n_{k-1}-1 }\hat L_{n_k}^{(m_k)} \sum_{n_{k-1}=1}^{n_{k-2}-1 }\hat L^{(m_{k-1})}_{n_{k-1}} \cdots \sum_{n_{2}=1}^{n_1-1 }\hat L^{(m_2)}_{n_2}\sum_{n_1=1}^N \hat L^{(m_1)}_{n_1}\Biggr). \label{tt2b}
\end{gather}

We may then express the elements of the discrete analogue of Dyson's series~(\ref{tt2b}) in terms of the
tridendriform algebra in analogy to Proposition~\ref{lemma01}.
{\begin{lem}{\label{lemmahat}}
The elements of the discrete Dyson series~\eqref{tt2b} are expressed as
\begin{eqnarray}
\hat T^{(m)}(N+1) = \sum_{\sum_{j=1}^km_j=m}\sum_{n=1}^N \bigl( \bigl(\cdots \bigl( \hat L^{(m_k)}\succ \hat L^{(m_{k-1})}\bigr) \succ \cdots \bigr)\succ \hat L^{(m_1)}\bigr)_n , \label{dendr2b}
\end{eqnarray}
where the tridendriform operation $\succ$ is defined as
\smash{$(a\succ b)_n := \Sigma(a)_n b_n$}, where \smash{$\Sigma(a)_n\!=\! \sum_{m=1}^{n-1} a_m$}, $a, b \in \EEnd\bigl({\mathbb C}^{\mathcal N}\bigr) \otimes{\mathfrak A}$ $($see also Remark~$\ref{class}).$

For instance, for $n=1$,
\[
\hat T^{(1)}(N+1)= \sum_{n=1}^N \hat L^{(1)}_n,
\]
 for $n=2$ and $n=3$
\begin{gather*}
\hat T^{(2)}(N+1) =
 \sum_{n=1}^N \bigl( \bigl(\hat L^{(1)} \succ \hat L^{(1)}\bigr)_n + \hat L_n^{(2)}\bigr),\\
\hat T^{(3)}(N+1)= \sum_{n=1}^N \bigl( \bigl(\bigl(\hat L^{(1)} \succ \hat L^{(1)}\bigr)\succ \hat L^{(1)}\bigr)_n +
\bigl(\hat L^{(1)}\succ L^{(2)} \bigr)_n + \bigl(\hat L^{(2)} \succ \hat L^{(1)}\bigr)_n + \hat L_n^{(3)}\bigr).
\end{gather*}
\end{lem}
\begin{proof} The proof goes along the same lines as in Proposition~\ref{lemma01} with $(a\succ b)_n := \Sigma(a)_n b_n$.
\end{proof}}

We recall the exponential map \smash{$\hat T(\alpha) = {\rm e}^{\hat {\mathcal Q}(\alpha)}$},
\smash{$\hat {\mathcal Q}(\alpha) = \sum_{m=1}^{\infty}\alpha^m \hat {\mathcal Q}^{(m)}$},
which will lead to the discrete analogue of Magnus expansion.
{Recursive expressions~(\ref{R1}) and~(\ref{R2}) hold for~\smash{$\hat T^{(m)}$},~\smash{$\hat {\mathcal Q}^{{m}}$}, but now $\hat T^{(m)}$ are given by~(\ref{dendr2b}).}

\begin{lem}
Let
\[
\hat T_{N+1}(\alpha) = {\rm e}^{\hat {\mathcal Q}(\alpha)}, \qquad \hat T_{N+1}(\alpha)=1+\sum_{m=1}^{\infty}\alpha^m \hat T^{(m)}(N+1),
\]
 where $\hat T^{(m)}(N+1)$ are given in Lemma~$\ref{lemmahat}$ and
\smash{$\hat {\mathcal Q}(\alpha) = \sum_{m=1}^{\infty}\alpha^m \hat {\mathcal Q}^{(m)}$}. The quantities
\[
\hat {\mathcal Q}^{(k)}(N+1):= \sum_{n=1}^N \hat \Omega^{(k)}_n,
\]
where \smash{$\hat \Omega^{(k)}_n = \hat {\mathcal Q}^{(k)}(n+1) - \hat {\mathcal Q}^{(k)}(n)$},
are expressed explicitly as
\begin{gather}
\hat {\mathcal Q}^{(1)}(N+1) = \sum_{n=1}^{N} \hat L^{(1)}_{n},\cr
\hat {\mathcal Q}^{(2)}(N+1) = \frac{1}{2}\sum_{n>n_1=1}^{N} \bigl[ \hat L^{(1)}_{n_1}, L^{(1)}_{n} \bigr] -
\frac{1}{2}\sum_{n=1}^{N} \bigl(\hat L^{(1)}_{n}\bigr)^2+\sum_{n=1}^N \hat L_n^{(2)}, \cr
\hat {\mathcal Q}^{(3)}(N+1) = \sum_{n=1}^N \biggl(\frac{1}{6} \sum_{m>k=1}^{n-1}
\bigl(\bigl[\hat L^{(1)}_{k}, \bigl[\hat L^{(1)}_{m}, \hat L^{(1)}_{n} \bigr] \bigr] +
\bigl[\bigl[\hat L^{(1)}_{k}, \hat L^{(1)}_{m} \bigr], \hat L^{(1)}_{n} \bigr]\bigr) \cr
\hphantom{\hat {\mathcal Q}^{(3)}(N+1) =\sum_{n=1}^N \biggl(}{}
+ \frac{1}{6} \sum_{m=1}^{n-1} \bigl(\hat L^{(1)}_{n}\bigl[\hat L^{(1)}_{n},
\hat L^{(1)}_{m}\bigr] + \bigl[\hat L^{(1)}_{n}, \hat L^{(1)}_{m}\bigr] \hat L^{(1)}_{m}\bigr) \cr
\hphantom{\hat {\mathcal Q}^{(3)}(N+1) =\sum_{n=1}^N \biggl(}{}
+\frac{1}{6} \sum_{m=1}^{n-1} \bigl(\bigl[\hat L^{(1)}_{n}, (\hat L^{(1)}_{m})^2\bigr] +
\bigl[\bigl(\hat L^{(1)}_{n}\bigr)^2, \hat L^{(1)}_{m}\bigr] \bigr)\nonumber\\
\hphantom{\hat {\mathcal Q}^{(3)}(N+1) =\sum_{n=1}^N \biggl(}{}
-\frac{1}{2} \sum_{n=1}^{N} \bigl(\hat L_n^{(1)} \hat L_n^{(2)} +\hat L_n^{(2)}\hat L_n^{(1)} \bigr)+\frac{1}{3} \bigl(\hat L^{(1)}_{n}\bigr)^3 \nonumber\\
\hphantom{\hat {\mathcal Q}^{(3)}(N+1) =\sum_{n=1}^N \biggl(}{}
-\frac{1}{2} \sum_{m=1}^{n-1}\bigl( \bigl[\hat L^{(1)}_n, \hat L_m^{(2)} \bigr] +\bigl[\hat L^{(2)}_n \hat L^{(1)}_m \bigr] \bigr) + \hat L_n^{(3)} \biggr), \qquad \dots\,. \label{QQb}
\end{gather}
\end{lem}
\begin{proof}
The proof is along the lines of Lemma~\ref{lemma2}.
\end{proof}

We are now in the position to express the discrete Magnus
expansion elements is terms of the right pre-Lie operation $\triangleleft$.
Let $A$, $B$ be linear operators that depend on a discrete parameter~${n\in {\mathbb N}}$ (see Remark~\ref{class}))
and define the binary operation
$ \triangleleft\colon (A, B) \mapsto A \triangleleft B$ such that
\begin{equation}
(A\triangleleft B)_n := \Biggl[A_n, \sum_{m=1}^{n-1} B_m \Biggr] +A_n B_n. \label{bina2b}
\end{equation}
Then the right pre-Lie identity is satisfied, i.e.,
\begin{equation}
( (A \triangleleft B) \triangleleft C)_n- ( A \triangleleft (B \triangleleft C))_n =( (A \triangleleft C) \triangleleft B)_n- (A \triangleleft(C \triangleleft B))_n. \label{sym2}
\end{equation}

\begin{pro}
The elements of the discrete Magnus expansion~\eqref{QQb} can be re-expressed in terms of the pre-Lie operation~\eqref{bina2b}.
\end{pro}
\begin{proof}
Recalling from~(\ref{bina2b}) and~(\ref{sym2}) that
\begin{equation}
(a \triangleleft b)_n = \Biggl[a_n, \sum_{m=1}^{n-1}b_m\Biggr] + a_n b_n, \label{tria1b}
\end{equation}
$a,b \in \EEnd\bigl({\mathbb C}^{\mathcal N}\bigr) \otimes {\mathfrak A}$, we conclude
\begin{gather*}
 \hat {\mathcal Q}^{(1)}(N+1) = \sum_{n=1}^N\hat L^{(1)}_n, \cr
 \hat {\mathcal Q}^{(2)}(N+1) =-\frac{1}{2} \sum_{n=1}^N
 \bigl(\hat L^{(1)} \triangleleft \hat L^{(1)}\bigr)_n +\sum_{n=1}^N\hat L_n^{(2)},\nonumber\\
 \hat {\mathcal Q}^{(3)}(N+1)= \sum_{n=1}^N\biggl(
\frac{1}{12} \bigl(\bigl(\hat L^{(1)} \triangleleft \hat L^{(1)}\bigr) \triangleleft \hat L^{(1)}\bigr)_n +\frac{1}{4}
\bigl(\hat L^{(1)} \triangleleft \bigl(\hat L^{(1)} \triangleleft \hat L^{(1)}\bigr)\bigr)_n \biggr)\cr
\hphantom{\hat {\mathcal Q}^{(3)}(N+1)= \sum_{n=1}^N\biggl(}{}
 -\frac{1}{2} \sum_{n=1}^N\bigl(\bigl(\hat L^{(2)} \triangleleft L^{(1)}\bigr)_n+ \bigl(\hat L^{(1)}\triangleleft \hat L^{(2)}\bigr)_n\bigr)
+ \sum_{n=1}^N \hat L_n^{(3)}, \qquad \dots\,. \nonumber
\end{gather*}

As is Proposition~\ref{prom}, we only compute the first few terms of the expansion. Using the recursive relations~(\ref{R1}) and~(\ref{R2}), expressions~(\ref{tt2b}) and~(\ref{QQb}) and definition~(\ref{tria1b}), we may explicitly compute higher-order terms, however such computations becomes increasingly complicated due to the existence of extra terms (see for instance the last term in \smash{$\hat {\mathcal Q}^{(2)}$}, the last two terms in line three \smash{$\hat {\mathcal Q}^{(3)}$} and the last three terns in the last line of \smash{$\hat {\mathcal Q}^{(3)}$,~(\ref{QQb}))}.
\end{proof}

A remark similar to Remark~\ref{remark12} applies here too by setting
$\hat L^{(1)} = \hat {\mathbb P}$, $\hat L^{(j)} = 0$, $\forall j>1$.

Now that we have the discrete Magnus expansions for both the
forward and backward evolution problems we may identify the solution
of evolution problem with ``folding''~\cite{Sklyanin, Sklyaninb}.

\begin{lem} 
Let ${\mathbb L}, \hat {\mathbb L} \in \EEnd\bigl({\mathbb C}^{\mathcal N}\bigr) \otimes {\mathfrak A}$ and
\[
T_{N+1}(\alpha) = {\mathbb L}_N(\alpha)\cdots {\mathbb L}_1(\alpha), \qquad \hat T_{N+1}(\alpha) = \hat {\mathbb L}_1(\alpha) \cdots \hat {\mathbb L}_N(\alpha)
\]
be solutions of \eqref{discrete} and~\eqref{discrete2}, respectively. Let also $K(\alpha) \in \EEnd\bigl({\mathbb C}^{\mathcal N}\bigr)$, then the quantity
${\mathbb T}_{N+1}(\alpha)= T_{N+1}(\alpha) K(\alpha) \hat T_{N+1}(\alpha)$~{\rm \cite{Sklyanin, Sklyaninb}},
is a solution of the difference equation
\begin{equation}
{\mathbb T}_{n+1}(\alpha) = {\mathbb L}_{n}(\alpha) {\mathbb T}_n(\alpha)
\hat {\mathbb L}_{n}(\alpha). \label{open}
\end{equation}
\end{lem}
\begin{proof}
The proof is straightforward by means of~(\ref{discrete}) and~(\ref{discrete2}).
\end{proof}

\begin{figure}[!ht]\centering
\vspace{10mm}

\begin{picture}(1,1)
	\put(-47,20){\line(1,0){130}}
\textcolor{blue}{\put(-47,-20){\line(1,0){130}}}
	\multiput(-39,30)(0,-8){8}{\line(0,-1){6}}
	\multiput(-26,30)(0,-8){8}{\line(0,-1){6}}
	\multiput(76,30)(0,-8){8}{\line(0,-1){6}}
	\put(16,-3){$\cdots\cdots$}
	\put(83,20){\line(1,-1){20}}
\textcolor{blue}{\put(83,-20){\line(1,1){20}}}
	\put(-44,-50){$N$}
	\put(-28,-50){$N-1$}
	\put(74,-50){$1$}
 \put(10, 25){$\longrightarrow$}
 \put(10, -35){${\mathbf \longleftarrow}$}
 \put(110, 0){$K$}
	\put(-80,20){$T_{N+1}$}
 \put(-80,-20){$\hat T_{N+1}$}\end{picture}

\vspace{15mm}

\caption{Graphical representation of ${\mathbb T}_{N+1}$ (``folding'').}\label{fig2}
\end{figure}

Figure~\ref{fig2} reads from left top to right bottom clockwise (following the arrows).

\begin{rem}
We first recall that
${\mathbb L}(\alpha)$, $\hat {\mathbb L}(\alpha)$ belong to \smash{$\EEnd\bigl({\mathbb C}^{\mathcal N}\bigr) \otimes {\mathfrak A}$}. We consider $\hat {\mathbb L}(\alpha) := {\mathbb L}^{-1}(-\alpha)$
\big(where \smash{${\mathbb L}(\alpha) {\mathbb L}^{-1}(\alpha) = 1_{{\mathbb C}_{\mathcal N}} \otimes 1_{\mathfrak A}$}\big) and consequently \smash{$\hat T_{N+1}(\alpha) = T_{N+1}^{-1}(-\alpha)$}. This case is associated with the so-called {\it reflection algebra} in integrable systems~\cite{Cherednik, Sklyaninb}.
In the continuum limit, after recalling ${\mathbb T}_{n+1} \to {\mathbb T}(x + \delta)$, ${\mathbb L}_n \to 1 + \delta\alpha A(x)$ and keeping linear terms we obtain from~(\ref{open})
\begin{equation}
\partial_{\xi}{\mathbb T}(\xi; \alpha) = \alpha A(\xi){ \mathbb T}(\xi; \alpha) + \alpha { \mathbb T}(\xi; \alpha) A(\xi), \label{openb}
\end{equation}
which is the evolution equation that includes ``folding''. The solution of~(\ref{openb}) is given by
\[
{\mathbb T}(\xi, \alpha) = T(\xi, \alpha) K(\alpha) \hat T(\xi, \alpha),
\]
 where \smash{$T(\xi, \alpha) = {\rm e}^{{\mathcal Q}(\xi, \alpha)}$} is given in~(\ref{eq:CTS_T}), and \smash{$\hat T(\xi ,\alpha) = T^{-1}(\xi,- \alpha)$}.
 Notice that although~(\ref{gaugeA}) and~(\ref{openb}) are mathematically similar, they have distinct physical interpretations as already pointed out.
\end{rem}

By means of a suitable combination of solutions of the forward and backward linear evolution problem and the presence of $K$ (Figure~\ref{fig2} above), integrable systems with generic boundary conditions can be constructed~\cite{Sklyanin, Sklyaninb}. Specifically, the trace of ${\mathbb T}_{N+1}$ produces a hierarchy of conserved quantities (Hamiltonians) that describe integrable systems with general open boundary conditions (as opposed to periodic). From the point of view of integrable systems and quantum algebras, the $K$-matrix and the object ${\mathbb T}$ are representations of the reflection algebra~\cite{Cherednik, Sklyanin, Sklyaninb}, which is a left (right) coideal of the underlying quantum (deformed) algebra of the integrable system at hand.

\section{Quantum groups as tridendriform and pre-Lie algebras}\label{section4}

 After the derivation of the discrete analogue of the Magnus expansion and the discrete pre-Lie expansion, we are ready to study some key relationships between quantum groups, tridendriform and pre-Lie algebras. We employ the Yangian ${\mathcal Y}(\mathfrak{gl}_{\mathcal N})$ (sometimes we simply write ${\mathcal Y}$ in the manuscript)~\cite{Chari, Drinfeld, Drinfeld1, Drinfeld2, Molev} as our key paradigm and show that the $N$-coproducts of the Yangian elements can be re-expressed in terms of suitable tridendriform and pre-Lie algebra binary operations. A relevant interesting work on the homomorphism between quasi-shuffle algebras and the Yangian can be found in~\cite{Hudson}.
Our approach here is direct and is based on the Faddeev--Reshetikhin--Takhtajan (FRT) construction~\cite{FadTakRes} and the explicit derivation of $N$-coproducts of the Yangian via tensorial representations of the associated FRT algebra as will be clear in the following sections.

We first recall the derivation of quantum groups (or quantum algebras) associated to solutions $R\colon V \otimes V \to V \otimes V $ of the Yang--Baxter equation (YBE)~\cite{Baxter, Yang} \big(henceforth $V = {\mathbb C}^{{\mathcal N}}$\big)
\begin{equation}
R_{12}(\lambda_1, \lambda_2)R_{13}(\lambda_1, \lambda_3)R_{23}(\lambda_2, \lambda_3)= R_{23}(\lambda_2, \lambda_3) R_{13}(\lambda_1, \lambda_3) R_{12}(\lambda_1,\lambda_2), \label{frt}
\end{equation}
where $\lambda_1, \lambda_2 \in {\mathbb C}$. Let \smash{$R = \sum_ x {\mathrm a}_x \otimes {\mathrm b}_x$}, where \smash{${\mathrm a}_x, {\mathrm b}_x\in \EEnd\bigl({\mathbb C}^{\mathcal N}\bigr)$}, then in the ``index notation'' \smash{$ R_{12} = \sum_x {\mathrm a}_x\otimes {\mathrm b}_x \otimes 1_V$}, \smash{$ R_{23} = 1_{V} \otimes \sum_x {\mathrm a}_x \otimes {\mathrm b}_x$}, and \smash{$R_{13} = \sum_x {\mathrm a}_x \otimes 1_V \otimes {\mathrm b}_x$}.

For the derivation of a quantum algebra associated to a given $R$-matrix, we employ the FRT construction.
We recall the standard ${\mathcal N} \times {\mathcal N}$ matrices $e_{x,y}$, with entries $(e_{x,y})_{z,w} = \delta_{x,z} \delta_{y,w}$, $x,y,z,w \in \{1,2, \dots, {\mathcal N}\}$.
\begin{defn} \label{frt1}
Let $R(\lambda_1, \lambda_2)\in \EEnd(V \otimes V)$ be a solution of the Yang--Baxter equation~(\ref{frt}), $\lambda_1, \lambda_2 \in {\mathbb C}$, $V ={\mathbb C}^{\mathcal N}$. Let also \smash{${\mathbb L}(\lambda) := \sum_{x,y\in 1}^{\mathcal N} e_{x,y} \otimes L_{x,y}(\lambda) \in \EEnd(V) \otimes {\mathfrak A}$}, where $\lambda\in {\mathbb C}$ and \smash{$L_{x,y}(\lambda) =\sum_{m=0}^{\infty}\lambda^{-m} L^{(m)}_{x,y}\in {\mathfrak A}$}. The quantum algebra ${\mathfrak A}$, associated to $R$, is defined as the quotient of the free unital, associative ${\mathbb C}$-algebra, generated by indeterminates
\[
\bigl\{ L^{(m)}_{x,y}\mid x,y \in \{1,2, \dots,{\mathcal N}\},\, m\in\{0,1,2, \dots\}\bigr\}\]
and relations
\begin{equation}
R_{12}(\lambda_1, \lambda_2){\mathbb L}_1(\lambda_1){\mathbb L}_2(\lambda_2) = {\mathbb L}_2(\lambda_2)\
{\mathbb L}_1(\lambda_1)R_{12}(\lambda_1, \lambda_2), \label{RTT}
\end{equation}
where $R_{12} =R \otimes 1_{\mathfrak A }$ and
\smash{${\mathbb L}_{1}=\sum_{x,y\in 1 }^{\mathcal N}e_{x,y}\otimes 1_V \otimes L_{x,y}$},\footnote{Notice that in ${\mathbb L}$ in addition to the indices 1 and 2 in~(\ref{RTT}) there is also an implicit ``quantum index'' $3$ associated to ${\mathfrak A}$,
which for now is omitted, i.e., one writes ${\mathbb L}_{13}$, ${\mathbb L}_{23}$.} \smash{${\mathbb L}_{2}=\sum_{x,y \in 1}^{\mathcal N} 1_V \otimes e_{x,y}\otimes L_{x,y}$}.
\end{defn}
It is worth noting that if equation~(\ref{RTT}) holds, then $R$ is necessarily a solution of the Yang--Baxter equation~(\ref{frt}) (see, e.g.,~\cite{Majid} for a proof). Definition~\ref{frt} basically states that different choices of solutions of the Yang--Baxter equation yield distinct quantum algebras.

\subsection[Yangian Y(gl\_N)]{Yangian $\boldsymbol{{\mathcal Y}(\mathfrak{gl}_{\mathcal N})}$}

We present in this subsection a concise review of the $\mathfrak{gl}_{\mathcal N}$ Yangian
${\mathcal Y}(\mathfrak{gl}_{\mathcal N})$, which is a special example of a quantum algebra, and provide all the necessary for our analysis information (for a~more detailed exposition the interested reader is refereed for instance to~\cite{Chari, Molev}). We consider the FRT point of view (Definition~\ref{frt1}). We set $\alpha^{-1}=\lambda$ (additive parameter) and consider~${R(\lambda_1, \lambda_2)= R(\lambda_1 - \lambda_2)}$. Specifically, in the case of the $\mathfrak{gl}_{\mathcal N}$ Yangian (${\mathcal Y}$ for brevity), the $R$-matrix is given by, $R(\lambda_1, \lambda_2) = (\lambda_1 - \lambda_2) 1_{V \otimes V} + {\mathcal P}$, where \smash{${\mathcal P} = \sum_{i,j =1}^{\mathcal N} e_{i,j} \otimes e_{j,i}$} is the permutation operator such that ${\mathcal P}(a\otimes b) = b \otimes a$, $a,b \in V$ and
\[
{\mathbb L}(\lambda) = 1_{V \otimes {\mathcal Y}} + \sum_{k=1}^{\infty} \lambda^{-k} L^{(m)}, \qquad
L^{(m)} = \sum_{x,y=1}^{\mathcal N} e_{x,y} \otimes L_{x,y}^{(m)}.
\]
 Then, by the fundamental relation~(\ref{RTT}), the algebraic relations among the generators \smash{$L_{x,y}^{(m)}$} of the $\mathfrak{gl}_{\mathcal N}$ Yangian are deduced and are given in the following definition.
\begin{defn}
The $\mathfrak{gl}_{\mathcal N}$ Yangian ${\mathcal Y}(\mathfrak{gl}_{\mathcal N})$ is a unital, associative algebra generated by indetermi\-nates $1_{\mathcal Y}$ (unit element) and \smash{$L_{i,j}^{(m)}$}, $i,j \in \{1,2,\dots,{\mathcal N}\}$, $m \in \{0,1,2,\dots\}$ and relations
\begin{equation}
\bigl[ L_{i,j}^{(p+1)}, L_{k,l}^{(m)}\bigr] -\bigl[ L_{i,j}^{(p)}, L_{k,l}^{(m+1)}\bigr] = L_{k,j}^{(m)}L_{i,l}^{(p)}- L_{k,j}^{(p)}L_{i,l}^{(m)}, \label{fund2b}
\end{equation}
where $[ \ ,\, ]\colon {\mathcal Y}(\mathfrak{gl}_{\mathcal N})\times {\mathcal Y}(\mathfrak{gl}_{\mathcal N})\to {\mathcal Y}(\mathfrak{gl}_{\mathcal N})$, such that $[a, b] = ab - ba$, for all $a,b \in {\mathcal Y}(\mathfrak{gl}_{\mathcal N})$.
\end{defn}

We recall the Lie exponential is written as \smash{${\rm e}^{\mathrm Q} := 1+ \sum_{n=1}^{\infty}\frac{{\mathrm Q}^n}{n!}$}.
We focus here on the case were ${\mathbb L}, {\mathrm Q} \in \EEnd\bigl({\mathbb C}^{\mathcal N}\bigr)
\otimes {\mathcal Y}(\mathfrak{gl}_{\mathcal N})$, and ${\mathbb L}$ satisfies relation~(\ref{RTT}).
We express the generic solution of~(\ref{RTT}) as \smash{${\mathbb L}(\lambda) = {\rm e}^{{\mathrm Q}(\lambda)}$}
and consider the formal $\lambda$ series expansions
\begin{equation}
{\mathbb L}(\lambda) = 1_{V \otimes {\mathcal Y}}+ \sum_{m=1}^{\infty}\lambda^{-m} L^{(m)},
\qquad {\mathrm Q}(\lambda) = \sum_{m=1}^{\infty}\lambda^{-m} {\mathrm Q}^{(m)}. \label{lexp}
\end{equation}
Then comparing the series expansion ${\mathbb L}(\lambda)$ and \smash{${\rm e}^{{\mathrm Q}(\lambda)}$}, using also~(\ref{lexp}),
we obtain expressions of~${\mathrm Q}^{(m)}$ in terms of symmetric polynomials of $L^{(m)}$ \big(see~(\ref{a00}) and~(\ref{ac}), ${{\mathcal Q}^{(n)}\! \to\! {\mathrm Q}^{(n)}}$, ${T^{(n)}\! \to\! L^{(n)} }$\big).
\begin{gather}
{\mathrm Q}^{(1)} = L^{(1)}, \qquad {\mathrm Q}^{(2)} = L^{(2)} -\frac{1}{2} \bigl(L^{(1)}\bigr)^2, \label{a0} \\
{\mathrm Q}^{(3)} = L^{(3)} - \frac{1}{2} \bigl( L^{(1)} L^{(2)} + L^{(2)} L^{(1)}\bigr) +\frac{1}{3} \bigl(L^{(1)}\bigr)^3, \qquad\dots, \label{acd}
\end{gather}
and vice versa as in~(\ref{a}) and~(\ref{c}), i.e., similarly, the logarithm can be defined such that $\ln{{\mathbb L}(\lambda)}= {\mathrm Q}(\lambda)$.

Our aim now is to derive an alternative set of generators of the Yangian
based on expressions~(\ref{a0}) and~(\ref{acd}).
Indeed, let us focus in the first few explicit exchange relations from~(\ref{fund2b}):
\begin{enumerate}\itemsep=0pt
\item $n=0$, $m=1$ $\bigl(L_{i,j}^{(0)} = \delta_{i,j}\bigr)$,
\begin{equation*}
\bigl[ L_{i,j}^{(1)}, L_{k,l}^{(1)}\bigr] = \delta_{i,l} L_{k,j}^{(1)} -\delta_{k,j} L_{i,l}^{(1)},
\end{equation*}
the latter are the familiar ${\mathfrak gl}_{\mathcal N}$ exchange relations.
\item $n=2$, $m=0$,
\begin{equation*}
\bigl[L_{i,j}^{(2)}, L_{k,l}^{(1)}\bigr] =\delta_{i,l} L_{k,j}^{(2)} - \delta_{k,j} L_{i,l}^{(2)}.
\end{equation*}

\item $n=2$, $m=1$,
\begin{equation*}
\bigl[ L_{i,j}^{(3)}, L_{k,l}^{(1)}\bigr] -\bigl[ L_{i,j}^{(2)}, L_{k,l}^{(2)}\bigr] =
L_{k,j}^{(1)} L_{i,l}^{(2)} - L_{k,j}^{(2)} L_{i,l}^{(1)}.
\end{equation*}

\item $n=3$, $m=0$,
\begin{equation*}
\bigl[L_{i,j}^{(3)}, L_{k,l}^{(1)}\bigr] =\delta_{i,l} L_{k,j}^{(3)} - \delta_{k,j} L_{i,l}^{(3)}.
\end{equation*}
\end{enumerate}

\begin{lem}\label{lemmaq}
The algebraic relations for the alternative set of generators of the Yangian ${\mathcal Y}(\mathfrak{gl}_{\mathcal N})$,
${\mathrm Q}_{i,j}^{(m)}$, $i,j \in \{1, \dots, {\mathcal N} \}$, $m \in {\mathbb Z}^+$ are given
\begin{gather*}
\bigl[ {\mathrm Q}_{i,j}^{(1)}, {\mathrm Q}_{k,l}^{(1)} \bigr] =\delta_{i,l}{\mathrm Q}^{(1)}_{k,j}- \delta_{k,j}{\mathrm Q}^{(1)}_{i,l},\\
 \bigl[ {\mathrm Q}_{i,j}^{(1)}, {\mathrm Q}_{k,l}^{(2)} \bigr] =\delta_{i,l}{\mathrm Q}^{(2)}_{k,j}-\delta_{k,j}
{\mathrm Q}^{(2)}_{i,l},\\
\bigl[ {\mathrm Q}_{i,j}^{(2)}, {\mathrm Q}_{k,l}^{(2)} \bigr]
= \delta_{i,l}{\mathrm Q}^{(3)}_{k,j}- \delta_{k,j}{\mathrm Q}^{(3)}_{i,l}
-\frac{1}{4}{\mathrm Q}_{k,j}^{(1)}\sum_{x=1}^{\mathcal N} {\mathrm
Q}_{i,x}^{(1)}{\mathrm Q}_{x,l}^{(1)}+ \frac{1}{4}\sum_{x=1}^{\mathcal N}{\mathrm Q}_{k,x}^{(1)}{\mathrm Q}_{x,j}^{(1)}
{\mathrm Q}_{i,l}^{(1)}
\\
\hphantom{\bigl[ {\mathrm Q}_{i,j}^{(2)}, {\mathrm Q}_{k,l}^{(2)} \bigr]=}{}
 + \frac{1}{12} \Biggl( \delta_{k,j} \sum_{x,y=1}^{\mathcal N}{\mathrm Q}_{i,x}^{(1)}{\mathrm Q}_{x,y}^{(1)}
{\mathrm Q}_{y,l}^{(1)} - \delta_{i,l} \sum_{x,y=1}^{\mathcal N}{\mathrm Q}_{k,x}^{(1)} {\mathrm Q}_{x,y}^{(1)}
{\mathrm Q}_{y,j}^{(1)}\Biggr), \qquad \dots\,. 
\end{gather*}
\end{lem}
\begin{proof}
The proof is based on~(\ref{acd}) and the exchange relations (1)--(4).
\end{proof}

We also recall the definition of a Hopf algebra (see, for example,~\cite{Chari, Majid}) and then present the Yangian as a Hopf algebra.

\begin{defn} A Hopf algebra $(\mathcal{A}, \Delta, \epsilon, s)$ is a unital, associative algebra $\mathcal{A}$ over some field $k$ equipped with the following linear maps:
\begin{itemize}\itemsep=0pt
\item multiplication, $m\colon \mathcal{A} \otimes \mathcal{A} \to \mathcal{A}$, $m(a,b) = ab$, which is associative $(ab)c = a (bc)$ for all $a,b,c \in \mathcal{A}$,
\item $\eta\colon k \to \mathcal{A}$, such that it produces the unit element for $\mathcal{A}$, $\eta(1) = 1_\mathcal{A}$,
\item coproduct, $\Delta\colon \mathcal{A} \to \mathcal{A} \otimes \mathcal{A}$, $\Delta(a) = \sum_j \alpha_j \otimes \beta_j$, which is coassociative, $(\id\otimes \Delta)\Delta(a) = (\Delta \otimes \id)\Delta(a)$ for all $a\in \mathcal{A}$,
\item counit, $\epsilon\colon \mathcal{A} \to k$, such that $(\epsilon \otimes \id)\Delta(a) = (\id \otimes \epsilon)\Delta(a)= a$ for all $a \in \mathcal{A}$,
\item antipode, $s\colon \mathcal{A} \to \mathcal{A}$ (a bijective anti-algebra map), $m (s \otimes \id)\Delta(a) = m(\id \otimes s) \Delta(a) = \epsilon(a) 1_\mathcal{A}$ for all $a \in \mathcal{A}$.
\item $\Delta$, $\epsilon$ are algebra homomorphisms and $\mathcal{A} \otimes \mathcal{A}$ is endowed with its
usual tensor product algebra structure: $(a \otimes b)(c \otimes d) = ac\otimes bd$ for all $a,b,c,d \in \mathcal{A}$.
\end{itemize}
\end{defn}

It is also useful to introduce the definition of a quasi-triangular Hopf algebra~\cite{Drinfeld, Drinfeld1, Drinfeld2}, which is the analogue of the FRT construction.
\begin{defn} \label{quasi}
Let $\mathcal{A}$ be a Hopf algebra over some field $k$, then $\mathcal{A}$ is a quasi-triangular Hopf algebra if there exists an invertible element ${\mathcal R} \in \mathcal{A}\otimes \mathcal{A}$ (universal ${\mathcal R}$-matrix):
\begin{enumerate}\itemsep=0pt
 \item ${\mathcal R} \Delta(a) =\Delta^{\rm op}(a) {\mathcal R}$ for all $a \in \mathcal{A}$,
 where $\Delta\colon \mathcal{A} \to \mathcal{A} \otimes \mathcal{A}$ is the coproduct on $\mathcal{A}$ and $\Delta^{\rm op}(a) = \pi \circ \Delta(a)$, $\pi\colon \mathcal{A} \otimes \mathcal{A} \to \mathcal{A} \otimes \mathcal{A}$, such that $\pi (a \otimes b) = b \otimes a$.
 \item $(\id \otimes \Delta){\mathcal R} = {\mathcal R}_{13} {\mathcal R}_{12}$, and $(\Delta \otimes \id){\mathcal R} = {\mathcal R}_{13} {\mathcal R}_{23}$.
\end{enumerate}
\end{defn}
Also, the following statements hold~\cite{Majid}:
\begin{enumerate}\itemsep=0pt
\item[{(a)}] The antipode
$s\colon \mathcal{A} \to \mathcal{A}$
satisfies $(\id \otimes s){\mathcal R}^{-1} = {\mathcal R}$,
$(s \otimes \id){\mathcal R} ={\mathcal R}^{-1}$.
\item[{(b)}] The counit $\epsilon\colon \mathcal{A} \to k$ satisfies $(\id \otimes \epsilon) {\mathcal R} = (\epsilon \otimes \id){\mathcal R} = 1_\mathcal{A}$.
\item[{(c)}] Due to (1) and (2) of Definition~\ref{quasi} the universal ${\mathcal R}$-matrix satisfies the Yang--Baxter equation
\begin{equation}
{\mathcal R}_{12} {\mathcal R}_{13} {\mathcal R}_{23} ={\mathcal R}_{23} {\mathcal R}_{13} {\mathcal R}_{12}. \label{UYBE}
\end{equation}
\end{enumerate}
We recall the index notation: let
${\mathcal R} = \sum_j a_j \otimes b_j \in \mathcal{A} \otimes \mathcal{A}$, then
\[
{\mathcal R}_{12} = \sum_j a_j \otimes b_j \otimes 1_\mathcal{A}, \qquad
{\mathcal R}_{23} = \sum_j 1_\mathcal{A} \otimes a_j \otimes b_j,\qquad
{\mathcal R}_{13} = \sum_j a_j \otimes 1_\mathcal{A} \otimes b_j.
\]
Proofs of the above statements can be found in~\cite{Chari, Majid}.

\begin{rem} \label{remy} Consider a representation $\rho_{\lambda}\colon \mathcal{A} \to \EEnd\bigl({\mathbb C}^{\mathcal N}\bigr)$, $\lambda \in {\mathbb C}$, such that
\begin{gather*}
(\rho_{\lambda} \otimes \id){\mathcal R} =: {\mathbb L}(\lambda) \in \EEnd\bigl({\mathbb C}^{\mathcal N}\bigr) \otimes \mathcal{A},\\
(\rho_{\lambda_1} \otimes \rho_{\lambda_2}){\mathcal R} =: R(\lambda_1, \lambda_2)\in \EEnd\bigl({\mathbb C}^{\mathcal N}\bigr) \otimes \EEnd\bigl({\mathbb C}^{\mathcal N}\bigr), \qquad \lambda_{1,2} \in {\mathbb C}.
\end{gather*}
 Then the Yang--Baxter equation reduces to~(\ref{RTT}), after acting with $(\rho_{\lambda_1} \otimes \rho_{\lambda_2} \otimes \id)$ on~(\ref{UYBE}) and to the Yang--Baxter equation~(\ref{frt}), after acting with $(\rho_{\lambda_1} \otimes \rho_{\lambda_2} \otimes \rho_{\lambda_3})$ on~(\ref{UYBE}).
Also, from relations~(2) of Definition~\ref{quasi} and (a)--(b) above, we deduce
\begin{enumerate}\itemsep=0pt
 \item[{(i)}] The coproduct $\Delta\colon \mathcal{A} \to \mathcal{A} \otimes \mathcal{A}$ satisfies $(\id \otimes \Delta){\mathbb L}(\lambda) ={\mathbb L}_{13}(\lambda) {\mathbb L}_{12}(\lambda)$.
\item[{(ii)}] The counit $\epsilon\colon \mathcal{A} \to {\mathbb C}$ satisfies $(\id \otimes \epsilon) {\mathbb L}(\lambda) = 1_{V}$.
\item[{(iii)}] The antipode $s\colon \mathcal{A} \to \mathcal{A}$ satisfies $(\id \otimes s){\mathbb L}^{-1}(\lambda) ={\mathbb L}(\lambda)$.
\end{enumerate}
\end{rem}

{\bf The $\boldsymbol{\mathfrak{gl}_{\mathcal N}}$ Yangian as a quasi-triangular Hopf algebra.} In the case of the Yangian in particular, recall that
\[
{\mathbb L}(\lambda) = \sum_{m=0}^{\infty} \lambda^{-m } L^{(m)}= \sum_{m=0}^{\infty }\sum_{a,b =1}^{\mathcal N} \lambda^{-m}e_{a,b} \otimes L_{a,b}^{(m)}, \qquad L^{(0)}_{a,b} = \delta_{a,b}1_{\mathcal Y}.
\]
The~$\mathfrak{gl}_{\mathcal N}$ Yangian is a quasi-triangular Hopf algebra on ${\mathbb C}$ equipped with a coproduct $\Delta\colon\! {\mathcal Y}(\mathfrak{gl}_{\mathcal N})\! \to{\mathcal Y}(\mathfrak{gl}_{\mathcal N}) \otimes {\mathcal Y}(\mathfrak{gl}_{\mathcal N})$ such that~(i) in Remark~\ref{remy} is satisfied and hence
\begin{equation*}
\Delta\bigl(L^{(m)}_{a,b}\bigr) = \sum_{c=1}^{\mathcal N}\sum_{k=0}^{m} L^{(k)}_{c,b} \otimes L^{(m-k)}_{a,c}.
\end{equation*}
For instance, for the first couple of generators of the Yangian the coproducts are given for $a,b \in \{1,2, \dots, {\mathcal N}\}$ as
\begin{gather*}
\Delta(L_{a,b}^{(1)}) = L_{a,b}^{(1)} \otimes 1_{\mathcal Y} + 1_{\mathcal Y}\otimes
L_{a,b}^{(1)},\\
\Delta(L_{a,b}^{(2)}) = L_{a,b}^{(2)}
\otimes 1_{\mathcal Y}+ 1_{\mathcal Y}\otimes L_{a,b}^{(2)} +
\sum _{c=1}^{\mathcal N}
L_{c,b}^{(1)}\otimes L_{a,c}^{(1)}, \\
\Delta(L_{a,b}^{(3)}) = L_{a,b}^{(3)}
\otimes 1_{\mathcal Y}+ 1_{\mathcal Y}\otimes L_{a,b}^{(3)} +
\sum _{c=1}^{\mathcal N}
L_{c,b}^{(1)}\otimes L_{a,c}^{(2)} +
\sum _{c=1}^{\mathcal N}
L_{c,b}^{(2)}\otimes L_{a,c}^{(1)}, \qquad \dots \,.
\end{gather*}
The coproduct is {coassociative}, and the $n$-coproducts can be derived by iteration via
\[\Delta^{(n+1)} = \bigl(\id \otimes \Delta^{(n)}\bigr)\Delta = \bigl(\Delta^{(n)} \otimes \id\bigr)\Delta.\]

Moreover, the counit exists $\epsilon\colon {\mathcal Y}(\mathfrak{gl}_{\mathcal N}) \to {\mathbb C}$ such that
\[
(\epsilon \otimes \id)\Delta\bigl(L_{a,b}^{(m)}\bigr) = (\id \otimes \epsilon)\Delta\bigl(L^{(m)}_{a,b}\bigr) = L_{a,b}^{(m)},
\]
and hence we obtain by iteration that \smash{$\epsilon\bigl(L^{(m)}_{a,b}\bigr) =0$}, $a,b \in \{1,2,\dots, {\mathcal N}\}$, and $m= {\mathbb Z}^+$.
The antipode $s\colon {\mathcal Y}(\mathfrak{gl}_{\mathcal N})\to {\mathcal Y}(\mathfrak{gl}_{\mathcal N})$ exists, such that
\[
m(s\otimes \id)\Delta\bigl(L_{a,b}^{(m)}\bigr) =m(\id \otimes s)\Delta\bigl(L_{a,b}^{(m)}\bigr)=\epsilon\bigl(L^{(m)}_{a,b}\bigr)1_{\mathcal Y}
\]
and recalling that \smash{$\epsilon\bigl(L_{a,b}^{(m)}\bigr) =0$}, we obtain
\begin{equation*}
\sum_{c=1}^{\mathcal N}\sum_{k=0}^{m} s\bigl(L^{(k)}_{c,b}\bigr) L^{(m-k)}_{a,c} =\sum_{c=1}^{\mathcal N}\sum_{k=0}^{m} L^{(k)}_{c,b} s\bigl(L^{(m-k)}_{a,c}\bigr) =0.
\end{equation*}
For example, the antipode for the first couple of generators is given as
\begin{gather*}
 \bigl(L^{(1)}_{a,b}\bigr) = - L_{a,b}^{(1)}, \nonumber\\
 s\bigl(L^{(2)}_{a,b}\bigr) =
 -L^{(2)}_{a,b} +\sum_{c=1}^{\mathcal N}L^{(1)}_{c,b}L^{(1)}_{a,c},\nonumber\\
 s\bigl(L^{(3)}_{a,b}\bigr) =
 -L^{(3)}_{a,b} +\sum_{c=1}^{\mathcal N}L^{(1)}_{c,b}L^{(2)}_{a,c} + \sum_{c=1}^{\mathcal N}L^{(2)}_{c,b}L^{(1)}_{a,c} -\sum_{c,d=1}^{\mathcal N} L^{(1)}_{d,b}L^{(1)}_{c,d}L^{(1)}_{a,c}, \qquad \dots\,. \end{gather*}

Also, for the first couple of the alternative generators of the Yangian algebra, we obtain (see also~\cite{Chari})
\begin{gather*}
\Delta\bigl({\mathrm Q}_{a,b}^{(1)}\bigr) = {\mathrm Q}_{a,b}^{(1)} \otimes 1_{\mathcal Y} + 1_{\mathcal Y}\otimes
{\mathrm Q}_{a,b}^{(1)} \cr
\Delta\bigl({\mathrm Q}_{a,b}^{(2)}\bigr) = {\mathrm Q}_{a,b}^{(2)}
\otimes 1_{\mathcal Y}+ 1_{\mathcal Y}\otimes {\mathrm Q}_{a,b}^{(2)} + \frac{1}{2}
\sum _{d=1}^{\mathcal N}\bigl({\mathrm Q}_{a,d}^{(1)}\otimes
{\mathrm Q}_{d,b}^{(1)}-
{\mathrm Q}_{d,b}^{(1)}\otimes {\mathrm Q}_{a,d}^{(1)}\bigr), 
\end{gather*}
and \smash{$\epsilon\bigl({\mathrm Q}^{(1)}_{a,b}\bigr) = \epsilon\bigl({\mathrm Q}^{(2)}_{a,b}\bigr) =0$}, and
\smash{$s\bigl({\mathrm Q}^{(1)}_{a,b}\bigr) = - {\mathrm Q}_{a,b}^{(1)}$}, \smash{$s\bigl({\mathrm Q}^{(2)}_{a,b}\bigr) =
 - {\mathrm Q}^{(2)}_{a,b} + \frac{1}{2} {\mathrm Q}^{(1)}_{a,b}$}.

The FRT construction typically provides the coproducts of the associated quantum algebra and thus also leads to the derivation of the counit and antipode. Recall the well-known statement, that given the coproduct of a Hopf algebra one can uniquely determine the counit and antipode of the Hopf algebra from the axioms of the Hopf algebra~\cite{Chari, Majid}.
In the next subsection, we examine tensor realizations of the Yangian and express them as elements of the discrete Magnus expansion.
From the tensor realizations, we derive the $N$-coproducts of the algebra generators and express them in terms of suitable
tridendriform and pre-Lie algebra operations.

\subsection{Tensor realizations of the Yangian and pre-Lie algebras}

 In order to demonstrate the links between the Yangian,
tridendriform and pre-Lie algebras, we consider tensor realizations of the Yangian.

Let us first recall the tensor realizations of any quantum algebra ${\mathfrak A}$, Definition~\ref{frt}. Let
\smash{${\mathbb L}(\lambda) =\sum_{m\geq 0}\frac{L^{(m)}}{\lambda^m} \in \EEnd\bigl({\mathbb C}^{\mathcal N}\bigr) \otimes
{\mathfrak A}$} satisfy equation~(\ref{RTT}) with \smash{$R \in \EEnd\bigl(\bigl({\mathbb C}^{\mathcal N}\bigr)^{\otimes 2}\bigr)$} being a~solution of the Yang--Baxter equation and \smash{$L^{(m)} = \sum_{x, y=1}^{{\mathcal N}} e_{x, y} \otimes L^{(m)}_{x,y}\in \EEnd\bigl({\mathbb C}^{\mathcal N}\bigr) \otimes
{\mathfrak A}$}, where~\smash{$L^{(m)}_{x,y} $} are the generators of the algebra $\mathfrak{A}$.
We recall the quantum monodromy matrix {(or simply monodromy)}
\smash{$T\in \EEnd\bigl({\mathbb C}^{\mathcal N}\bigr) \otimes
\mathfrak{A}^{\otimes N}$}:
\begin{equation}
T_{0,12\dots N}(\lambda) := \bigl(\id \otimes \Delta^{(N)}\bigr){\mathbb L}(\lambda)=
{\mathbb L}_{0N}(\lambda) \cdots {\mathbb L}_{01}(\lambda),\label{T}
\end{equation}
which also satisfies the algebraic relation~(\ref{RTT}) (i.e., provides a tensor realization of the algebra defined by~(\ref{RTT})) and is also a solution of the discrete evolution problem~(\ref{discrete}).
It is shown that the monodromy matrix $T$ satisfies~(\ref{RTT}) by repeatedly using equation~(\ref{RTT}).
Historically, the index $0$ is called ``auxiliary'', whereas the indices $1,2, \dots, N$ are called ``quantum'',
and they are usually suppressed for simplicity, i.e., we simply write $T_{0, N+1}$ (or $T_{N+1}$).

\begin{rem}{\label{tt}} We define the transfer matrix ${\mathfrak t}_{N+1}(\lambda) := \operatorname{tr}_0 (T_{0,N+1}(\lambda)) \in
{\mathfrak A}^{\otimes N}$.
Recall that the monodromy matrix $T$ satisfies~(\ref{RTT}), and hence it is shown that the transfer matrix provides a~family of mutually
commuting quantities~\cite{FadTakRes}
\begin{gather*}
{\mathfrak t}_{N+1}(\lambda) =\lambda^N \sum_{k=1}^N \frac{{\mathfrak t}_{N+1}^{(k)}}{\lambda^k},\\
[ {\mathfrak t}_{N+1}(\lambda), {\mathfrak t}_{N+1}(\mu)] =0\qquad \forall\ \lambda, \mu \in {\mathbb C}\quad \Rightarrow\quad \bigl[ {\mathfrak t}_{N+1}^{(k)}, {\mathfrak t}_{N+1}^{(l)}\bigr] =0.
\end{gather*}
These commutation relations guarantee the ``quantum integrability'' of a spin-chain like system with periodic boundary conditions. For instance, the Hamiltonian and momentum of the system belong to the family of the mutual commuting quantities.
\end{rem}

We focus henceforth on the Yangian, with the $R$-matrix being $R(\lambda) = 1_{V \otimes V} + \lambda^{-1}{\mathcal P}$ (recall~${\mathcal P}$ is the permutation operator),
\[
{\mathbb L}(\lambda) =1_{V \otimes {\mathcal Y}} + \sum_{m>0}\frac{L^{(m)}}{\lambda^m} \in \EEnd\bigl({\mathbb C}^{\mathcal N}\bigr) \otimes
{\mathcal Y}(\mathfrak{gl}_{\mathcal N}), \qquad L^{(m)} = \sum_{x, y=1}^{{\mathcal N}} e_{x, y} \otimes L^{(m)}_{x,y},\]
where
\smash{$L^{(m)}_{x,y} $} are the generators of the Yangian $\mathfrak{gl}_{\mathcal N}$. We also note that a standard simple solution of the fundamental relation~(\ref{RTT}) for the Yangian is ${\mathbb L}(\lambda) = 1 +\lambda^{-1} {\mathbb P}$, where \smash{${\mathbb P} = \sum_{i,j =1}^{\mathcal N} e_{i,j} \otimes {\mathbb P}_{i,j}$} and ${\mathbb P}_{i,j} \in \mathfrak{gl}_{\mathcal N}:$ \smash{$[ {\mathbb P}_{i,j}, {\mathbb P}_{k,l}] = \delta_{i,l} {\mathbb P}_{k,j}-\delta_{k,j} {\mathbb P}_{i,l}$}.

\begin{pro} Let \smash{$L_{a,b}^{(m)}$}, $a,b \in \{1, 2,\dots, {\mathcal N } \}$, $m
\in\{1, 2, \dots\}$, be the generators of the Yangian~$\mathfrak{gl}_{\mathcal N}$~\eqref{fund2b}. Let also the tridendriform algebra operation $\prec$ defined as
$(a\prec b)_n := a_n\Sigma(b)_n $, where \smash{$\Sigma(a)_n= \sum_{m=1}^{n-1} a_m$}, $a, b \in \EEnd\bigl({\mathbb C}^{\mathcal N}\bigr) \otimes{\mathcal Y}$ $($see also Remark~$\ref{class})$. Then, the coproducts of the algebra generators are expressed in terms of the tridendriform algebra operation~$\prec$~as
\begin{align}
\Delta^{(N)}\bigl(L^{(m)}_{a,b}\bigr) &= \sum_{\sum_{j=1}^km_j=m}\sum_{1\leq n_1<\cdots < n_k\leq N} \bigl(\bigl(L^{(m_k)}_{a,b_k}\bigr)_{n_k}\bigl(L^{(m_{k-1})}_{b_k,b_{k-1}}\bigr)_{n_{k-1}} \cdots \bigl(L^{(m_1)}_{b_1, b} \bigr)_{n_1} \bigr)\cr
 &= \sum_{\sum_{j=1}^km_j=m} \sum_{n=1}^N \bigl(L^{(m_k)}_{a,b_k} \prec\bigl( L^{(m_{k-1})}_{b_k,b_{k-1}}\prec \bigl( \cdots \prec \bigl( L^{(m_2)}_{b_2, b_1}\prec L^{(m_1)}_{b_1, b} \bigr) \cdots \bigr)\bigr) \bigr)_n. \label{mine}
\end{align}
\end{pro}
\begin{proof}
Before we carry on with the proof, it is useful to recall the tensor index notation: let~$A$,~$B$ be elements of a unital, associative ${\mathbb C}$-algebra
${\mathfrak A}$, then $A_n B_m = B_m A_n$, $n>m$, where the indices~$n$,~$m$ denote the position of the objects~$A$,~$B$ on an $N$-tensor product, i.e.,
\begin{gather}
 A_n := 1_{\mathfrak A} \otimes \dots\otimes 1_{\mathfrak A}\otimes \underbrace{A}_{\text{n$^{\rm th}$ position}} \otimes 1_{\mathfrak A}\otimes \dots \otimes 1_{\mathfrak A}, \label{nota}\\
 A_n B_m :=1_{\mathfrak A} \otimes \dots\otimes 1_{\mathfrak A}\otimes \underbrace{B}_{\text{m$^{\rm th}$ position}} \otimes 1_{\mathfrak A}\otimes \dots\otimes 1_{\mathfrak A} \otimes \underbrace{A}_{\text{n$^{\rm th}$ position}} \otimes 1_{\mathfrak A} \otimes\dots \otimes 1_{\mathfrak A}.\nonumber
\end{gather}

We recall that the quantum monodromy is expressed as
\[
T_{N+1}(\alpha)\! =\! 1 + \sum_{m>0} \alpha^m T^{(m)}(N+1) , \qquad \alpha = \frac{1}{\lambda},
\]
 with coefficients given in~(\ref{tt2}) and in terms of a tridendriform binary operation in~(\ref{dendr2}) and
 \[
 T^{(m)}(N+1) = \sum_{a,b =1}^{{\mathcal N}} e_{a,b} \otimes T_{a,b}^{(m)}(N+1).
 \]
 We also recall that \smash{$T_{N+1}(\lambda)= \bigl(\id \otimes \Delta^{(N)}\bigr){\mathbb L}(\lambda)$}, hence \smash{$T^{(m)}_{a,b}(N+1) = \Delta^{(N)}\bigl(L^{(m)}_{a,b}\bigr)$} ($\Delta$ is an algebra homomorphism), which via~(\ref{T}) leads to~(\ref{mine}).

We read off the coproducts for the first three orders $m=1,2,3$:
\begin{gather*}
(1) \quad \displaystyle\Delta^{(N)}\bigl(L^{(1)}_{a,b}\bigr) = \sum_{n=1}^N \bigl(L^{(1)}_{a,b}\bigr)_n ,\\
 (2) \quad \Delta^{(N)}\bigl(L^{(2)}_{a,b}\bigr) = \sum_{n=1}^N \bigl(L^{(2)}_{a,b}\bigr)_n + \sum_{n=1}^N \bigl( L^{(1)}_{a,b_1} \prec L^{(1)}_{b_1, b} \bigr)_n,\\
(3) \quad \Delta^{(N)}(L^{(3)}_{a,b}) = \sum_{n=1}^N \bigl(L^{(3)}_{a,b}\bigr)_n + \sum_{n=1}^N \bigl( \bigl( L^{(2)}_{a,b_1} \prec L^{(1)}_{b_1, b} \bigr)_n+ \bigl( L^{(1)}_{a,b_1} \prec L^{(2)}_{b_1, b} \bigr)_n\bigr)\\ \hphantom{(3) \quad \Delta^{(N)}(L^{(3)}_{a,b}) =}{} + \sum_{n=1}^N \bigl( L^{(1)}_{a,b_2} \prec \bigl(L^{(1)}_{b_2, b_1} \prec L^{(1)}_{b_1, b} \bigr)\bigr)_n .\tag*{\qed}
\end{gather*}
\renewcommand{\qed}{}
\end{proof}

We note that a presentation of the Yangian is given via the evaluation homomorphism, $\mbox{ev}\colon {\mathcal Y}(\mathfrak{gl}_{\mathcal N}) \to \mathfrak{gl}_{\mathcal N}$ such that \smash{$L_{a,b}^{(m)} \mapsto \theta^m {\mathbb P}_{a,b}$}, $\theta \in {\mathbb C}$, and
${\mathbb P}_{a,b}$ are the $\mathfrak{gl}_{\mathcal N}$ generators.
We also recall that \smash{${\mathbb L}(\lambda) =1 + \sum_{m>0} \lambda^{-m} {\mathbb P}^m$} is a solution of~(\ref{RTT}), hence the following map also exists, \smash{$\sigma\colon {\mathcal Y}(\mathfrak{gl}_{\mathcal N}) \to \mathfrak{gl}_{\mathcal N}$} such that \smash{$L_{a,b}^{(m)} \mapsto ({\mathbb P}^m)_{a,b}$}. Before we come to the second key connection between the coproducts of the alternative Yangian generators ${\mathrm Q}_{a, b}$ and pre-Lie algebras, we first introduce a useful lemma and some handy notation.

\begin{lem}{\label{lq}}
Let $\mathcal{A}$ be a Rota--Baxter algebra and $R\colon \mathcal{A} \to \mathcal{A}$ be Rota--Baxter operator, such that $[R(x), y] = [R(y), x] =0$, for all $x, y \in \mathcal{A}$. Let also the binary operations of the tridendriform algebra be defined in~\eqref{action2}. Then $x\prec y = y\succ x$, for all $x, y \in \mathcal{A}$.
\end{lem}
\begin{proof} From the definitions in~(\ref{action2}), we conclude that $x\prec y = y\succ x$ for all $x,y \in \mathcal{A}$
\end{proof}

\begin{rem}\label{nota2} We introduce some notation that will be used in the following proposition. Let \smash{${\mathcal O} \in \EEnd\bigl({\mathbb C}^{\mathcal N} \bigr) \otimes {\mathfrak A}$}, then \smash{${\mathcal O}_n =\sum_{a, b}^{\mathcal N}e_{a,b} \otimes ({\mathcal O}_{a,b})_n$}, where $({\mathcal O}_{a,b})_n $ is defined as in~(\ref{nota}). We assume that $A, B \in \EEnd\bigl({\mathbb C}^{\mathcal N} \bigr) \otimes {\mathfrak A}$, and we define
\begin{align*}
\bigl((A \triangleright B)_n\bigr)_{a,b} :={}& \Biggl(\Biggl[ \sum _{m=1}^n A_m, B_n \Biggr] + A_n B_n\Biggr)_{a,b} \\
={}& \sum_{c =1}^{\mathcal N} \Biggl( \sum_{m=1}^{n-1} (A_{a,c})_m (B_{c,b})_n- (B_{a,c})_n\sum_{m=1}^{n-1} (A_{c,b})_m + (A_{a,c})_n (B_{c,b})_n \Biggr). \nonumber
\end{align*}
Recalling the definitions of the tridendriform binary operations~(\ref{action2}), we conclude for the above expression
\begin{equation*}
((A \triangleright B)_n)_{a,b} = \sum_{c =1}^{\mathcal N} ( (A_{a,c} \succ B_{c,b})_n- (B_{a,c} \prec A_{c,b})_n + (A_{a,c} B_{c,b})_n).
\end{equation*}
Due to the fact that $[(A_{a,c})_m, (B_{c,b})_n] =0$, $n\neq m$, Lemma~\ref{lq} applies.
\end{rem}

\begin{pro} Let \smash{${\mathrm Q}_{a,b}^{(m)}$}, $a,b \in \{1, 2,\dots, {\mathcal N } \}$, $m \in {\mathbb Z}^+$, be the alternative generators of the Yangian $\mathfrak{gl}_{\mathcal N}$ $($Lemma~$\ref{lemmaq})$. The coproducts of these generators are emerging from a pre-Lie algebra and are explicitly expressed in terms of a tridendriform algebras binary operations.
\end{pro}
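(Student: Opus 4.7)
The plan is to combine the FRT exponential presentation ${\mathbb L}(\lambda) = e^{{\mathrm Q}(\lambda)}$ with the discrete Magnus machinery of Section 3. First I would apply $(\mbox{id}\otimes \Delta^{(N)})$ to both sides of ${\mathbb L}(\lambda) = e^{{\mathrm Q}(\lambda)}$; since $\Delta^{(N)}$ is an algebra homomorphism acting on the quantum factors, this gives
$$
T_{N+1}(\lambda) = (\mbox{id}\otimes \Delta^{(N)}){\mathbb L}(\lambda) = e^{(\mbox{id}\otimes \Delta^{(N)}){\mathrm Q}(\lambda)}.
$$
On the other hand, Remark \ref{remark12} and the discussion preceding (\ref{pre1}) identify the same monodromy as $T_{N+1}(\lambda) = e^{{\mathbb Q}_{N+1}(\lambda)}$ with ${\mathbb Q}_{N+1}(\lambda) = \sum_{m\geq 1}\lambda^{-m}{\mathbb Q}^{(m)}(N+1)$ and the ${\mathbb Q}^{(m)}(N+1)$ written in terms of the discrete pre-Lie action (\ref{bina2}). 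Matching the coefficients of $\lambda^{-m}$ in the two exponents gives the identification
$$
\Delta^{(N)}({\mathrm Q}^{(m)}) = {\mathbb Q}^{(m)}(N+1),
$$
so the coproducts are pre-Lie elements built from the generators $L^{(k)}$ via the Rota-Baxter summation $\Sigma$. The explicit formulas at $m=1,2,3$ are then just (\ref{pre1}) with $T^{(k)}$ replaced by the tensor generators, and the higher orders follow recursively from (\ref{R1}), (\ref{R2}).

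To rewrite these expressions in tridendriform form at the level of matrix entries, I would fix $(a,b)$ and apply Remark \ref{nota2}. Because any two entries $(L^{(k)}_{c,d})_m$ and $(L^{(l)}_{c',d'})_n$ with $m\neq n$ sit on distinct tensor slots (cf. the notation (\ref{nota})), they commute; hence the commutativity hypothesis of Lemma \ref{lq} holds component-wise and each factor $(A\triangleright B)_n$ appearing in the pre-Lie expressions decomposes into a sum of tridendriform products:
$$
\big((A\triangleright B)_n\big)_{a,b} = \sum_{c=1}^{\cal N}\Big((A_{a,c}\succ B_{c,b})_n - (B_{c,b}\prec A_{a,c})_n + (A_{a,c}\cdot B_{c,b})_n\Big ).
$$
Iterating this substitution through the nested terms $\big((A\triangleright B)\triangleright C\big)_n$ and $\big(A\triangleright (B\triangleright C)\big)_n$ in the third and higher orders yields the required explicit expressions purely in terms of $\succ,\ \prec$ and $\cdot$.

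The genuine work lies not in the algebraic identifications, which are essentially automatic given the preceding lemmas, but in organising the combinatorial bookkeeping of the nested tridendriform expressions arising from the pre-Lie action: each level of nesting multiplies the number of matrix-index summations by ${\cal N}$ and splits each $\triangleright$ into three tridendriform terms. I would carry out the argument fully at orders $m=1,2,3$ to make the structure transparent, recording the analogues of the formulas already obtained for $\Delta^{(N)}(L^{(m)}_{a,b})$ in the preceding Proposition, and then invoke the iterative scheme (\ref{R1}), (\ref{R2}) together with (\ref{action2}) for the general case.
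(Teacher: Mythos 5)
Your proposal follows essentially the same route as the paper: identify $\Delta^{(N)}({\mathrm Q}^{(m)})$ with the discrete Magnus element ${\mathbb Q}^{(m)}(N+1)$ of the monodromy (your derivation via applying $\mbox{id}\otimes\Delta^{(N)}$ to ${\mathbb L}=e^{{\mathrm Q}}$ is a cleaner justification of what the paper simply asserts), insert the pre-Lie expressions (\ref{pre1}), and pass to matrix entries via Remark \ref{nota2} and Lemma \ref{lq}. Two small corrections: (i) the final formulas (\ref{pre1b}) are written in the generators ${\mathrm Q}^{(k)}$ rather than the coefficients $L^{(k)}$, so after substituting into (\ref{pre1}) you must also invert (\ref{a0}), (\ref{acd}); this is what produces the additional terms $\tfrac12({\mathrm Q}^{(1)}_n)^2$, $\tfrac16({\mathrm Q}^{(1)}_n)^3$, etc., which your sketch omits; (ii) in your component decomposition the middle term should read $-(B_{a,c}\prec A_{c,b})_n$, coming from $-(B_{a,c})_n\sum_{m<n}(A_{c,b})_m$ in the commutator, not $-(B_{c,b}\prec A_{a,c})_n$.
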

\begin{proof}
First we use the fact that \smash{$\bigl(\id\otimes \Delta^{(N)}\bigr){\mathrm Q}^{(m)} = {\mathcal Q}^{(m)}(N+1)$}~(\ref{T}), we also recall the expressions \smash{${\mathcal Q}^{(m)}(N+1)$}~(\ref{pre1}), then via~(\ref{a0}) and~(\ref{acd}) we conclude
\begin{gather}
 \bigl(\id\otimes \Delta^{(N)}\bigr){\mathrm Q}^{(1)}= \sum_{n=1}^N{\mathrm Q}^{(1)}_n, \cr
 \bigl(\id\otimes \Delta^{(N)}\bigr){\mathrm Q}^{(2)}=-\frac{1}{2} \sum_{n=1}^N
 \bigl({\mathrm Q}^{(1)} \triangleright {\mathrm Q}^{(1)}\bigr)_n +\sum_{n=1}^N\bigl({\mathrm Q}_n^{(2)} +\frac{1}{2} \bigl({\mathrm Q}^{(1)}_n\bigr)^2\bigr),\nonumber\\
 \bigl(\id\otimes \Delta^{(N)}\bigr){\mathrm Q}^{(3)}= \sum_{n=1}^N\biggl(
\frac{1}{4} \bigl(\bigl({\mathrm Q}^{(1)} \triangleright {\mathrm Q}^{(1)}\bigr) \triangleright {\mathrm Q}^{(1)}\bigr)_n +\frac{1}{12} \bigl({\mathrm Q}^{(1)} \triangleright ({\mathrm Q}^{(1)} \triangleright {\mathrm Q}^{(1)})\bigr)_n \biggr)\cr
\hphantom{\bigl(\id\otimes \Delta^{(N)}\bigr){\mathrm Q}^{(3)}= \sum_{n=1}^N\biggl(}{} -
\frac{1}{2} \sum_{n=1}^N\bigl(\bigl({\mathrm Q}^{(2)} \triangleright {\mathrm Q}^{(1)}\bigr)_n+ \bigl({\mathrm Q}^{(1)}\triangleright {\mathrm Q}^{(2)}\bigr)_n\bigr) \cr
\hphantom{\bigl(\id\otimes \Delta^{(N)}\bigr){\mathrm Q}^{(3)}= \sum_{n=1}^N\biggl(}{} -
\frac{1}{4} \sum_{n=1}^N\bigl(\bigl(\bigl({\mathrm Q}^{(1)}\bigr)^2 \triangleright {\mathrm Q}^{(1)}\bigr)_n+ \bigl({\mathrm Q}^{(1)}\triangleright \bigl({\mathrm Q}^{(1)}\bigr)^2 \bigr)_n\bigr) \nonumber\\
\hphantom{\bigl(\id\otimes \Delta^{(N)}\bigr){\mathrm Q}^{(3)}= \sum_{n=1}^N\biggl(}{}
+ \sum_{n=1}^N \biggl({\mathrm Q}_n^{(3)} + \frac{1}{2} \bigl( {\mathrm Q}_n^{(2)} {\mathrm Q}_n^{(1)} +{\mathrm Q}_n^{(1)} {\mathrm Q}_n^{(2)} \bigr) + \frac{1}{6} \bigl({\mathrm Q}^{(1)}_n\bigr)^3\biggr), \quad \dots\,. \!\!\label{pre1b}
\end{gather}

The $N$-coproducts of the alternative generators \smash{${\mathrm Q}^{(m)}_{a,b}$} of the Yangian are extracted from expressions~(\ref{pre1b}), by recalling
\begin{gather*}
\bigl(\id \otimes \Delta^{(N)}\bigr){\mathrm Q}^{(n)}= \sum_{a, b} e_{a,b} \otimes \Delta^{(N)}\bigl({\mathrm Q}_{a,b}^{(n)}\bigr) \qquad {\mathcal Q}_{a,b}^{(p)}(N) = \Delta^{(N)}\bigl({\mathrm Q}_{a,b}^{(p)}\bigr),\qquad\!
a,b \in \{1, \dots, {\mathcal N}\}.
\end{gather*}
Specifically, for the first couple of terms, $n=1, 2$, we obtain from~(\ref{pre1b}), Remark~\ref{nota2},
\begin{gather*}
(1)\quad \Delta^{(N)}\bigl({\mathrm Q}^{(1)}_{a,b}\bigr) = \sum_{n=1}^N\bigl({\mathrm Q}^{(1)}_{a,b}\bigr)_n,\\
(2) \quad \Delta^{(N)}\bigl({\mathrm Q}^{(2)}_{a,b}\bigr) = \sum_{n=1}^N \bigl( \bigl( {\mathrm Q}^{(2)}_{a,b}\bigr)_n -
\frac{1}{2} \sum_{c =1}^{\mathcal N} \bigl( \bigl({\mathrm Q}^{(1)}_{a,c} \succ {\mathrm Q}^{(1)}_{c,b}\bigr)_n- \bigl({\mathrm Q}^{(1)}_{a,c} \prec {\mathrm Q}^{(1)}_{c,b}\bigr)_n \bigr)\bigr).
\end{gather*}
The notation and the requirements introduced in Remark~\ref{nota2} apply in this case.
\end{proof}

We present below an example/application associated with the alternative discrete expansion described in Section~\ref{section3.1}. More details on the example will be presented in future studies, as this is of particular interest.
\begin{exam}{\label{e1}} Let $R\colon {\mathbb C}^{\mathcal N} \otimes {\mathbb C}^{\mathcal N} \to {\mathbb C}^{\mathcal N} \otimes {\mathbb C}^{\mathcal N} $ be a solution of the YBE of the form $R(\lambda) = r + \lambda^{-1} {\mathcal P}$, where $r$ is also a solution of the YBE and $ {\mathcal P}$ is the permutation operator (see, e.g.,~\cite{Doikoutw, DoiSmo2, DoiSmo1}). Such solutions can be obtained for instance from involutive set-theoretic solutions of the YBE~\cite{DoiSmo1, Drin}. Recall now the expression of the monodromy matrix $T(\lambda) = {\mathbb L}_N(\lambda) \cdots {\mathbb L}_1(\lambda)$, where in this example ${\mathbb L}(\lambda) \to R(\lambda)$. Given that $R$ satisfies the YBE, one shows that the monodromy matrix naturally satisfies the~(\ref{RTT}). In this case, expressions~(\ref{alt}) and~(\ref{alt0}), (1)--(3) in Section~\ref{section3.1} hold: $M \to r$, $L \to {\mathcal P}$ and $\alpha = \lambda^{-1}$.
\end{exam}

\subsubsection*{Classical integrability}
Connections between (tri)dendriform, pre-Lie algebras and classical integrable systems and the associated
classical deformed algebras are naturally identified given the findings of the previous section (see also work related to integrable ODEs~\cite{Gol}).
The key point in the description and construction of classical integrable systems from the Hamiltonian point of view is the existence of a classical matrix $r \in \EEnd\bigl({\mathbb C}^{\mathcal N} \otimes {\mathbb C}^{\mathcal N}\bigr)$ that satisfies the classical YBE~\cite{FT, TST} (see also~\cite{RotYB})
\begin{equation*}
[r_{12}(\lambda_1-\lambda_2), r_{13}(\lambda_1-\lambda_3)] + [r_{12}(\lambda_1-\lambda_2)+r_{13}(\lambda_1-\lambda_3), r_{23}(\lambda_2-\lambda_3)] =0.
\end{equation*}
The classical YBE can be seen as a linearization of the quantum YBE, i.e., we set
\[
R= 1 + \delta r + {\mathcal O}\bigl(\delta^2\bigr).
\]

The classical Lax operator
\[
{\mathbb L}_n(\lambda) = \sum_{a,b =1}^{\mathcal N} e_{a,b} (L_{a,b}(\lambda))_n\in \EEnd\bigl({\mathbb C}^{\mathcal N}\bigr),
\]
 where recall $(L_{a,b}(\lambda))_n = \smash{\sum_{k \geq 0} \lambda^{-k}\bigl(L^{(k)}_{a,b}\bigr)_n}$ (see also Remark~\ref{class}) is associated with a discrete space integrable system and satisfies~\cite{FT, Skl1, Skl2}
\begin{equation}
\{ {\mathbb L}_n(\lambda_1) \underset{,}{\otimes} {\mathbb L}_m(\lambda_2)\} = [r(\lambda_1-\lambda_2), { \mathbb L}_n(\lambda_1)\otimes {\mathbb L}_m(\lambda_2)]\delta_{n,m}. \label{sklyanin}
\end{equation}
A Poisson bracket of the form~(\ref{sklyanin}) is called a {\it Sklyanin bracket}. For any two ${\mathcal N}\times {\mathcal N}$ matrices \smash{$A =\sum_{i,j=1}^{\mathcal N} e_{i,j}A_{i,j}$} and \smash{$B = \sum_{i,j=1}^{\mathcal N}e_{i,j} B_{i,j}$}, the bracket \smash{$\{A \underset{,}{\otimes} B\}$} is determined by the brackets of its matrix elements (see also, for example,~\cite{Chari, FT})
\begin{equation}
\{A \underset{,}{\otimes} B\} = \sum_{i,j, k,l=1}^{\mathcal N} e_{ij}\otimes e_{k,l}\ \{A_{i,j}, B_{k,l}\}, \nonumber
\end{equation}
i.e., \smash{$\{A \underset{,}{\otimes} B\}_{ij, kl} =\{A_{i,j}, B_{k,l}\}$}.
Equation~(\ref{sklyanin}) is the classical analogue of the fundamental relation~(\ref{RTT}),
that is, the matrix elements \smash{$L^{(k)}_{a,b}$} are the generators of a Poisson algebra defined
by~(\ref{sklyanin}) (the classical analogue of a quantum algebra).

As in the quantum case, algebraic quantities in involution exist due the existence of an $r$-matrix. Indeed, the monodromy matrix $T(\lambda) = {\mathbb L}_N(\lambda) \cdots {\mathbb L}_1(\lambda) \in \EEnd\bigl({\mathbb C}^{\mathcal N}\bigr)$ also satisfies Sklyanin's bracket and hence it is shown that ${\mathfrak t}(\lambda)= \operatorname{tr} T(\lambda)$ satisfies $\{ {\mathfrak t}(\lambda), {\mathfrak t}(\mu)\} =0 $, for all~${\lambda, \mu \in {\mathbb C}}$, which, given that \smash{${\mathfrak t}(\lambda) = \sum_{m=1}^N\lambda^{-m}{\mathfrak t}^{(m)}$}, leads to a family of quantities in involution \smash{$\big\{ {\mathfrak t}^{(m)}, {\mathfrak t}^{(k)}\big\} =0 $}, i.e., classical integrability {\`a} la Liouville is shown~\cite{FT}.

In the case of the classical $ \mathfrak{gl}_{\mathcal N}$ Yangian, the $r$-matrix is given as \smash{$r(\lambda) = \frac{{\mathcal P}}{\lambda}$}, where we recall that ${\mathcal P}$ is the \smash{${\mathcal N}^2 \times {\mathcal N}^2$} permutation operator \big(recall ${\mathcal P} (a\otimes b) = b \otimes a$, $a, b \in {\mathbb C}^{\mathcal N}$\big). Also we recall, as in the quantum case, \smash{${\mathbb L}(\lambda) = 1 +\sum_{m>0} \lambda^{-m} L^{(m)}$} and \smash{$L^{(m)} = \sum_{a, b=1}^{\mathcal N } L_{a,b}e_{a,b}$}. By substituting~${\mathbb L}$ in~(\ref{sklyanin}), the classical Yangian relations are recovered, i.e., the classical analogues of~(\ref{fund2b}). Everything holds as in the quantum case, but \smash{$[\ ,\, ] \mapsto - \{\ ,\, \}$}, i.e., \smash{$ L^{(m)}_{a,b}$} are commutative objects, and are the generators of the classical Yangian. \smash{$T_{a,b}^{(m)}$} (called also non-local charges) are the classical analogues of the coproducts \smash{$\Delta^{(N)}l\bigl(L_{a,b}^{(m)}r\bigr)$}.

In the continuum case, the Lax operator \smash{${\mathbb A}(x, \lambda) \in \EEnd\bigl({\mathbb C}^{\mathcal N}\bigr)$} satisfies a linear Sklyanin bracket (see \cite{FT})
\begin{equation*}
\{ {\mathbb A}(x,\lambda_1) \underset{,}{\otimes} {\mathbb A}(y, \lambda_2)\} = [r(\lambda_1-\lambda_2), {\mathbb A}(x,\lambda_1)\otimes 1 + 1 \otimes {\mathbb A}(y,\lambda_2)]\delta(x-y). 
\end{equation*}
In the case of the Yangian, specifically \smash{$r(\lambda) = \frac{{\mathcal P}}{\lambda}$} and \smash{${\mathbb A}(x,
\lambda) = \frac{1}{\lambda} A(x)$}. The monodromy matrix in this case is defined as
\[
{\rm T}(x, \alpha) = \overset{\curvearrowleft} {\mathrm P} \exp \biggl(\int_{0}^{x}{\mathbb A}(\xi, \alpha) {\rm d}\xi\biggr)
\] (see also~(\ref{eq:CTS_T}) and Remark~\ref{remcont}) and satisfies the quadratic relation~(\ref{sklyanin})~\cite{FT, Mac}. Also, the expressions for the continuous Magnus expansion~(\ref{qb2a}),~(\ref{iter}) and~(\ref{qb2b}) hold \big($\alpha = \lambda^{-1}$\big).

With this, we conclude our analysis on the links between quantum and classical algebras, arising in the context of integrable systems, and pre-Lie and (tri)dendriform algebras. Further study of the quantum algebras associated with the Example~\ref{e1} will follow in future investigations. Example~\ref{e1} is of special interest given recent findings on the characterization of the quantum algebra associated with involutive set-theoretic solutions of the YBE as quasi-bialgebras~\cite{Doikoutw, DoGhVl}. These quasi-bialgebras naturally emerge after suitably twisting the Yangian~\cite{Doikoutw}.
We note that set-theoretic solutions do not have a classical analogue, a fact that makes Example~\ref{e1} even more intriguing.
Moreover, it is known that all involutive set-theoretic solutions of the YBE come from braces~\cite{Jesp, Rump1}. This, together with the fact that braces are obtained from pre-Lie algebras, while {pre-Lie} algebras in turn are naturally connected to quantum algebras via the FRT construction, provide strong motivations to further investigate these algebraic structures at the level of solutions, but also at the level of the emerging quantum algebras.

\subsection*{Acknowledgments}
I would like to thank the anonymous referees for their constructive comments and suggestions. Support from the EPSRC research grant EP/V008129/1 is acknowledged.

\pdfbookmark[1]{References}{ref}
\LastPageEnding

\end{document}